\documentclass[twocolumn,prb,aps,floatfix,amsmath,amssymb,superscriptaddress]{revtex4}
\pdfoutput=1

\usepackage{mathtools}
\usepackage{amsfonts}
\usepackage{amssymb}

\usepackage{graphicx}

\usepackage{amsthm}

\usepackage{euscript}
\usepackage{cprotect}
\usepackage[usenames,dvipsnames]{xcolor}
\usepackage[arc,color,frame,knot,pdf,poly]{xy}

\usepackage[strict]{changepage}

\newtheorem{theorem}{Theorem}[section]
\newtheorem{lemma}[theorem]{Lemma}

\newtheorem{proposition}[theorem]{Proposition}

\numberwithin{equation}{section}
\numberwithin{figure}{section}
\DeclareMathOperator{\Free}{Free}

\newcommand{\be}{\begin{equation}}
\newcommand{\ee}{\end{equation}}
\newcommand{\poly}{{\rm poly}}

\begin{document}
\title{Obstructions To Classically Simulating The Quantum Adiabatic Algorithm}

\author{M. B. Hastings$^1$,\\ with Appendix by M. H. Freedman}
\affiliation{Microsoft Research, Station Q, CNSI Building, University of California, Santa Barbara, CA, 93106}

\begin{abstract}
We consider the adiabatic quantum algorithm for systems with ``no sign problem", such as the transverse field Ising mode, and analyze the equilibration time for quantum Monte Carlo (QMC) on these systems. We ask: if the spectral gap is only inverse polynomially small, will equilibration methods based on slowly changing the Hamiltonian parameters in the QMC simulation succeed in a polynomial time?
We show that this is {\it not} true, by constructing  counter-examples. In some examples, the space of configurations
where the wavefunction has non-negligible amplitude has a nontrivial fundamental group, causing
the space of trajectories in imaginary time to break into disconnected components with only negligible probability outside these components.   
 For the simplest example we give with an abelian fundamental group, QMC does not equilibrate but still solves the optimization problem.  More
severe effects leading to failure to solve the optimization can occur when the fundamental group is a free group on two generators.  
Other examples where QMC fails have a {\it trivial} fundamental group, but still use ideas from topology relating group presentations to simplicial complexes.  We define gadgets to realize these Hamiltonians as the effective low-energy dynamics of a transverse field Ising model.
We present some analytic results on equilibration times which may be of some independent interest in the theory of equilibration of Markov chains.  Conversely, we show that a small spectral gap implies slow equilibration at low temperature for some initial conditions and for a natural choice of local QMC updates.
\end{abstract}
\maketitle

The quantum adiabatic algorithm\cite{adqopt} uses a parameter-dependent Hamiltonian $H_s$ to find the ground state of a classical optimization algorithm.
The Hamiltonian $H_0$ is chosen to have a simple ground state, typically of product form, that can be easily prepared.  After initializing the system in this state, the system is evolved under a time-dependent Hamiltonian, $H_{s(t)}$, with $s(t)$ slowly changing as a function of time $t$, until at some final time $t_f$ with $s(t_f)=1$, the Hamiltonian describes some classical optimization problem. 
If the rate of change is sufficiently slow, then the system stays close to the ground state throughout this procedure, thus solving the optimization problem.
The original adiabatic quantum algorithm relied on coherent evolution with a sufficiently large gap $\Delta$ between the ground and first excited state along the path of $H_s$.  The time required for the algorithm scales $1/\Delta^2$, and depends also upon the norm of terms in the Hamiltonian.  Later work considered effects of incoherent evolution\cite{childs}.
One potential advantage of the adiabatic algorithm is that it is a general purpose tool; similarly to the classical simulated annealing algorithm, while the quantum adiabatic algorithm may not be the best tool for any given problem, it can be applied to a wide range of problems.

Unfortunately, little is known definitely about the performance of the adiabatic quantum algorithm for practical problems, despite much work on studying small systems using exact diagonalization\cite{ed} and larger systems using numerical quantum Monte Carlo (QMC) studies\cite{apy}.  One remarkable recent study\cite{troyer} is based directly on studying the Hamiltonian of the D-Wave device\cite{dwave}, showing evidence for nontrivial collective quantum effects.
The QMC studies, and this device, all involve Hamiltonians with ``no sign problem", as explained below; in particular, they are transverse field Ising models.

While the quantum adiabatic algorithm with arbitrary Hamiltonians is known to be as powerful as the circuit model for quantum computation\cite{equivalence}, it is unclear what advantage such Hamiltonians with no sign problem can have over classical computation.  Perhaps one can always simulate such systems using QMC?  In particular, since the main theoretical problem with QMC is understanding the equilibration time, perhaps a bound on the spectral gap $\Delta$ implies some bound on the equilibration time, if we follow certain protocols for equilibrating the QMC simulation?  In this paper, we show that for the most natural equilibration protocol, where one equilibrates the QMC at $s=0$ and then slowly changes $s$ and tries to equilibrate the QMC $s$ changes, this is {\it not} true in general.  We give several counter-examples (we modify the algorithm and protocol to account for some of the counter-examples, but then find other counter-examples to the modified algorithm).

One reason that we might expect this annealing protocol to work is that a similar approach {\it does} work when using a matrix product state algorithm (instead of QMC) to study a one-dimensional quantum system with a spectral gap, in that we exploit the idea of following the matrix product state along the path to avoid problems with getting trapped in a local minimum\cite{1dadiabatic}.  That result exploits the area law\cite{area1,area2}.
Another result along these lines is that if we restrict to
{\it frustration-free} Hamiltonians without a sign problem, then the problem
of simulating adiabatic evolution is in the complexity class BPP\cite{t1}.  Note
also that if we leave the context of adiabatic computing, the problem
of approximating the ground state energy of sign-problem free
Hamiltonians is in\cite{t2} complexity class
AM, while the analogous problem for arbitrary local Hamiltonans
is QMA-complete\cite{Kitaev}.

For a further illustration of why we might believe that the spectral gap $\Delta$ is related to the equilibration time of quantum Monte Carlo, consider
a single particle Hamiltonian in a tilted double-well potential:
\be
\label{doublewell}
H=-\frac{1}{2m}\partial^2+\mu x^2+x^4+hx,
\ee
with $\mu$ negative.  Suppose we start at positive $h$, where the ground state has most of its amplitude in the left well, and then change the sign of $h$,
trying to equilibrate the QMC as $h$ is changed.  Problems with equilibration may occur for large $|\mu|$, where the barrier between the wells is large, because the trajectory of the particle in imaginary time can get stuck in the left well, unable to tunnel through the barrier.  However, if the barrier is indeed high, then at $h=0$, the spectral gap becomes small.  So, based on this simple example one might expect a connection between spectral gap and equilibration of QMC.  We emphasize that the connection between spectral gap and equilibration only appears here when we follow this particular annealing protocol; if we instead start at  non-negligible $h$ but with the initial condition of the particle in the wrong well, then the QMC might be slow to equilibrate even though there is no spectral gap.

However, this example is really based on having the space of likely positions of the particle split into two disconnected sets, the left and right wells, with the coordinate $x=0$ being unlikely to occur.  That is, one may say this is an example of a nontrivial $\pi_0$, the zeroth homotopy group, of the space of likely positions.  QMC, however, considers a {\it trajectory} in imaginary time, and so one might expect obstructions based on a nontrivial $\pi_1$, the fundamental group.  This is, in fact, the basis for some of our examples below.

In a sense, this kind of obstruction is well-known.  For example, problems with equilibrating different winding number sectors when simulating particles on a torus are well-studied and various nonlocal update rules have been introduced to try to alleviate this problem\cite{winding}.
However, these nonlocal updates are often introduced in a way that is quite specific to the particular Hamiltonian considered and we do not know a general way to implement them that would deal with the cases considered later, so we do not consider nonlocal updates further.

Further, while problems with equilibrating different topological sectors have been considered before, typically this has been studied for abelian fundamental groups such as the fundamental group of the circle or torus while we consider cases where the fundamental group is a free group on two generators.  The non-abelian nature of this group leads to significantly worse effects from the different winding number sectors, as discussed below. 

After giving these counter-examples with a nontrivial fundamental group, we show how to realize them using gadgets.  Then, we give counter-examples with a {\it trivial} fundamental group.  Topology still plays a role in these examples, as we relate a certain presentation of the trivial group to a simplicial complex.  These examples build on the examples with nontrivial fundamental group, so that section should be read first.

Finally, we present more analytic results.  We show that the converse of the conjecture is true: a small spectral gap implies slow equilibration for certain initial conditions.  We also present some upper bounds on the equilibration time.

An appendix at the end of the paper, due to M. H. Freedman, provides some additional geometric and topological context to phenomena discussed in the main text.

\section{Review of Adiabatic Algorithm, Sign Problem, and the Annealing Protocol}
The adiabatic algorithm considers a parameter dependent Hamiltonian $H_s$.
A typical application of this algorithm would be a system of $N$ spin-$1/2$ spins with a parameter dependent Hamiltonian $H_s$ of the form
\be
\label{exQAD}
H_s=
-(1-s) \sum_i S^x_i +  s V,
\ee
where $i=1,...,N$ labels the different sites, $S^x_i$ is the $x$-component of the spin on site $i$ and $V$ is some operator which is diagonal in the $z$-basis.  The operator $V$ will typically be a sum of many terms, each depending upon a small number of spins.  For example, one could have $V=\sum_{i,j} J_{ij} S^z_i S^z_j$ for some matrix $J_{ij}$.  We will assume some polynomial bound on the norm of the terms in the Hamiltonian.
The goal of the algorithm is to find a configuration of the spins in the $z$ basis which minimizes $V$.  Such a configuration could be the 
solution of some classical optimization problem, with the particular problem being encoded in the matrix $J$.
Note that the ground state of $H_0$ is a product state with all spins polarized in the $x$-direction and so can be prepared easily.

The performance of the algorithm depends upon the minimum spectral gap $\Delta$ along the path.
As a minor technical point, for some optimization problems, we find that $V$ has a degenerate ground state.  In this case, the gap between the ground and first excited state goes to zero at the end of the path.  This does not pose a problem as any of the final states represents a solution to the optimization problem.  Alternately, suppose that for most of the path (namely, for the portion of the path with $B \geq 1/\poly(N)$), the inverse gap is at most polynomial.  Then, we run the adiabatic algorithm for the portion of the path with $B \geq 1/\poly(N)$), and then terminate.  Measuring the final state in the $z$-basis will give an outcome that is, with probability $1-1/\poly(N)$, a minimum of $V$.

\subsection{Sign Problem}
The Hamiltonian (\ref{exQAD}) is an example of a Hamiltonian with ``no sign problem", enabling the use of a path integral QMC algorithm explained below to study properties of the ground state.  We now give a fairly general explanation of the sign problem and of a simplified QMC algorithm.  Consider a Hamiltonian $H$ and a basis of state $\psi(c)$ where $c$ is some discrete index.  We refer to $c$ as a ``configuration".  Then, we say that ``the Hamiltonian has no sign problem" if whenever $c \neq d$ we have
\be
\label{nosignproblem}
\langle \psi(c), H \psi(d) \rangle \leq 0
\ee
Of course, this definition is somewhat imprecise.  For {\it any} Hamiltonian, we can find a basis transformation to diagonalize the Hamiltonian in which case Eq.~(\ref{nosignproblem}) is satisfied for that basis.  
However, for the QMC to be efficient, we want to be able to efficiently calculate $\langle \psi(c), H \psi(d) \rangle$ for all $c,d$.  To do this,
we are often interested in the case that the basis $\psi(c)$ is a product basis; i.e., we consider a system of $N$ sites, as in Eq.~(\ref{exQAD}), with the Hilbert space of the whole system being the tensor product of the $N$ different Hilbert spaces and the basis $\psi(c)$ should be a product basis.
It should be noted  that while Eq.~(\ref{nosignproblem}) is sufficient not to have a sign problem, it is not necessary; for example by a basis change, various Heisenberg models which seems to have a sign problem in one basis can be shown not to have a sign problem in another\cite{otherQMC}.

The simplest version of the path integral Monte Carlo works as follows.  Consider the partition function
$Z={\rm Tr}(\exp(-\beta H))$ for some given $\beta$.  We divide the system into $K$ ``time slices" for some integer $K$, introducing one index $c_i$ per time slice and summing over all indices.  We then write:
\begin{eqnarray}
\label{statweight}
&&{\rm Tr}(\exp(-\beta H))\\ \nonumber
&=&\sum_{\{c_i\}} \prod_{i=1}^{K} \langle \psi(c_{i+1}) | \exp(-\beta H/K) | \psi(c_{i}) \rangle,
\end{eqnarray}
where we have $i=1,...,K$ and where the sum is over all possible values of $c_1,...,c_K$.  We fix $c_{K+1}=c_1$; that is, $c_i$ is periodic.
We refer to a sequence of $c_1,...,c_K$ as a trajectory, saving the term ``path" instead for a path in parameter space.
We can then approximate the terms in the product by something which we can efficiently calculate
\begin{eqnarray}
&&
\langle \psi(c_{i+1}) | \exp(-\beta H/K) | \psi(c_{i}) \rangle \\ \nonumber
& \approx & \delta_{c_{i+1},c_i} - \frac{\beta}{K} \langle \psi(c_{i+1})| H | \psi(c) \rangle.
\end{eqnarray}
For large enough $K$ (polynomially large in $\beta$ and in the norm of $H$), the approximation error becomes negligible.

At this point, we now have expressed the partition sum as a sum over positive quantities which can be statistically sampled using a Monte Carlo procedure.  This allows one to directly determine observables which are diagonal in the basis such as $Z^{-1} \langle \psi(c) | \exp(-\beta H) | \psi(c) \rangle$ by measuring the probability distribution of the indices $c_i$.  Using more sophisticated methods, it is possible to determine off-diagonal quantities; for example, see Ref.~\onlinecite{offdiagworm}.

The discretization described here is somewhat simplified.  In practice, more sophisticated methods are often used to deal with discretization in a more efficient way\cite{continuous}.

Part of the description of a quantum Monte Carlo Algorithm is not just the space of states (in this case, the sequence of $c_1,...,c_K$) and the probabilities (given above) but also the transition rule.  The simplest possible choice is a local update rule in which we randomly pick a given $c_i$ and then try randomly changing the value of that $c_i$.  For a system with an exponentially large number of possible choices of $c_i$ (as in (\ref{exQAD}), where each $c_i$ takes one of $2^N$ possible values), we often consider changing only the value of one or a small number of spins at a time.

In fact, such a choice of transition rule is essential to defining what we mean by $\pi_1$.
If we have a discrete set of states $\psi(c)$, we can define a graph, with vertices of the graph corresponding to possible values of $c$, and edges
between vertices $c,d$ if $\langle \psi(c), H \psi(d) \rangle \neq 0$.
To define the concept of $\pi_1$, we interpret the graph as a $1$-complex, and we add some $2$-cells to the $1$-complex corresponding to different local updates that the QMC can implement: for example, if we have $3$ different vertices, $c,d,e$, with edges connecting all three, and it is possible for a local update to change the sequence $c_1=c, c_2=d, c_3=e$ into the sequence $c_1=c, c_2=e, c_3=e$, then we attach a $2$-cell to those three $1$-cells.  If we can update a sequence $c_1=c, c_2=d, c_3=e$ into the sequence $c_1=c, c_2=d', c_3=e$ then we attach a $2$-cell to the four $1$-cells corresponding to following four edges of the graph: $(c,d),(d,e),(e,d'),(d',c)$.

However, we will not consider this kind of more formal definition of a complex any further.  The reason is, we would have to also in some way decide that configurations which appear with negligible amplitude should be removed from the graph when computing $\pi_0$ (as in Eq.~(\ref{doublewell}) or $\pi_1$.  Currently, we have not given a precise way of specifying how to remove these configurations.  Because of this imprecision, we will be content with heuristically arguing below that certain algorithms will be unable to equilibrate in certain cases.

\subsection{Annealing}
The most fundamental problem with QMC, as with any Monte Carlo algorithm, especially when applied to optimization, is the problem of equilibration.
At $s=0$, the system equilibrates rapidly,
but in general, we expect that equilibration for large enough $s$ may be exponentially slow for some choices of the Hamiltonian for at least {\it some} choices of the initial state.

However, there are many possible annealing protocols for the Monte Carlo dynamics.  In classical Monte Carlo simulations, for example, we can follow a simulated annealing procedure of slowly reducing the temperature.  Similarly, we can anneal the QMC dynamics using Hamiltonian $H=H_s$ by slowly increasing $\beta$ or by slowly changing $s$.
This last is the case that we analyze in this paper: fixing $\beta$ and slowly changing $s$.  The goal is to change $s$ sufficiently slowly that the procedure remains close to equilibrium throughout.  
A more sophisticated approach than slowly changing $s$ is to use a parallel tempering procedure (equilibrating at several different values of $s$ simultaneously, and allowing moves that swap trajectories between different values of $s$); we do not analyze this in this paper.
See Refs.~\onlinecite{apy,troyer,sandvikchanging} for practical implementations of some of these approaches.
Later we briefly consider the case in which $\beta$ is allowed to change during the annealing.

The conjecture, then, is that if we start at $s=0$ with the Monte Carlo procedure equilibrated and if the parameter $s$ changes only by a small amount $\epsilon$ from one step to the next, then if $\epsilon$ is polynomially small (in $N$, $\beta$, and the spectral gap $\Delta$), then the procedure remains close to equilibrium for all $s$, equilibrating along the path in a polynomial time.
It is important to emphasize that we do not conjecture that if there is a spectral gap, then the procedure at a fixed $s$ equilibrates in polynomial time for {\it any} starting trajectory, which clearly would not be true because at $s=1$ the procedure may not equilibrate rapidly, but only that the equilibration happens when we follow a particular path.

\subsection{Other Boundary Conditions}
Above, we have imposed periodic boundary conditions on $c_i$.  Other boundary conditions are possible.  For example, one can have open boundary conditions, where we evaluate
\begin{eqnarray}
\label{obc}
&&\langle \psi_1 | (\exp(-\beta H)) | \psi_1 \rangle
\\ \nonumber
&=&\sum_{\{c_i\}} \prod_{i=1}^{K-1} \langle \psi(c_{i+1}) | \exp(-\beta H/K) | \psi(c_{i}) \rangle,
\end{eqnarray}
where $\psi_1$ is the state $\sum_c |\psi(c) \rangle$.  One can also allow parameters to change with imaginary time; instead of taking $H=H_s$ for all times, we can choose weights
\be
\sum_{\{c_i\}} \prod_{i=1}^{K-1} \langle \psi(c_{i+1}) | \exp(-\beta H_{s(i)}/K) | \psi(c_{i}) \rangle,
\ee
where $s(i)$ depends upon $i$.  These different boundary conditions will be considered below.

\section{Counter-Examples}
We now describe several systems which serve as counter-examples to the conjecture above.  We will describe these systems not in terms of Ising spin degrees of freedom, but rather in terms of a particle moving in some potential.  That is, they will all be examples of single-particle quantum mechanics.  The annealing protocol for each system consists not of changing a transverse field but of changing some other parameters, and we consider paths in parameter space that are not just linear interpolations between two different Hamiltonians but are more general paths.
In the next section, we will describe some ``gadgets" to define Ising spin systems whose effective dynamics mimics that of the systems considered in this section, though again we consider more general paths in parameter space, allowing arbitrary tuning of Ising couplings and transverse fields.

We will describe the first system using a continuous space of states for the particle and use a discrete space of states for later examples.  In the case of a continuous space of states, we will briefly explain how to discretize the space of states.  In the first three examples, the number of such discrete
states will be proportional to a quantity that we write as $M$.
We will refer to coupling constants being ``polynomially large" if they are bounded by a polynomial in $M$, and we will consider a gap to be at most polynomially small if it is at least an inverse polynomial in $M$.  When we define the gadgets later, the needed value of $N$ (the number of Ising spins) will be a polynomial in $M$, so any quantities that are polynomials in $M$ will then be polynomials in $N$ also.  Similarly, if a quantity is exponentially small in a polynomial in $M$ (for example, tunneling between different winding number sectors in the first example), it will be exponentially small in a power of $N$, albeit possibly a power less than $1$; in general, we call such quantities exponentially small and do not worry about the power in the exponent.
In the fourth example, we will have exponentially many low energy states, and constructing the gadgets will be more complicated.

The reason for giving several counter-examples is that we will modify the algorithm in attempts to deal with some of the early counter-examples, and then we will provide further counter-examples which show that even the modified algorithm does not work in general.  Also, the later counter-examples show more severe effects and overcome some unsatisfactory features of the early counter-examples.

\subsection{First Example: Circle}
The first system (we will see later why we refer to this as a circle) is single particle quantum mechanics of a particle moving in two-dimensions with the Hamiltonian
\be
\label{Hcontin}
H=\frac{-1}{2m} \partial^2 + V(x,y),
\ee
where the derivative $\partial^2=\partial_x^2+\partial_y^2$.
We choose the potential
\be
V(x,y)=\mu (x^2+y^2)+g (x^2+y^2)^2 - h x.
\ee
We will fix $g=1$ throughout but we will change the coupling constants $\mu,m,h$ along the annealing path.
Note that at $h=0$, for $\mu<0$, the minimum of the potential is such that $x^2+y^2=-\mu/2g$.

The point of this example is not to show that the QMC procedure is unable to find the ground state, but rather to show that the QMC procedure must take an exponential time to equilibrate despite having only a polynomially small gap, thus contradicting the conjecture above.  We will develop other examples later where QMC does not find the ground state.

The continuous form is given for illustrative purposes only and we will work with a discretization instead.  Let us define quantities $R,a$, with $a<<1$ and $R>>1$.  The quantity $a$ will define a discretization scale at short distances and $R$ will define a maximum distance.  We will let $x,y$ become discrete, each being equal to an {\it integer} multiple of $a$, such that $|x|,|y| \leq R$.  The discretized Hamiltonian is
\begin{eqnarray}
H&= &-\frac{1}{2ma^2} \sum_{m,n} \Bigl( |x+a,y \rangle \langle x,y | + |x,y+a \rangle\langle x,y| \Bigr) + h.c. \nonumber \\
&& + \sum_{x,y}V(x,y)  |x,y \rangle \langle x,y |.
\end{eqnarray}
Note that $M \sim (R/a)^2$.

We follow the following annealing path.  We begin at $\mu=h=0$ and $m=1$.  Then, we make $\mu$ become negative, keeping $h=0$, until $\mu$ becomes equal to $-R^2$.  
For negative $\mu$, the minimum of the potential is at
\be
x^2+y^2=r_{min}=-\frac{\mu}{2},
\ee
so at $\mu=R^2/2$ the minimum of the potential is at $x^2+y^2=R^2/2$.  By choosing $a$ to be $1/\poly(R)$, we can make the effects of discretization negligible.  
Once $\mu$ becomes sufficiently negative, the particle is very narrowly confined near the minimum of the potential: transverse motion has a very high energy cost, with the second derivative of $V$ in the perpendicular direction being of order $\mu^2$.
The effective dynamics is that of a particle moving in a circle (hence the name of this example), with Hamiltonian
\be
\label{Hcirc}
H_{circ}=\frac{-1}{2mr_{min}^2} \partial_\theta^2+h R \cos(\theta),
\ee
where the angular variable $\theta$ is periodic with period $2\pi$ and where we have included the term $h$ in the potential because the next step involves increasing $h$.
The low-lying excited states at $h=0$ correspond to different angular momentum modes, with the lowest states having zero angular momentum and the first excited state being doubly degenerate, having angular momentum $\pm 1$.  The energy splitting is of order $1/r_{min}^2$, and hence the gap remains at most polynomially small up to this point.

We next increase $h$ from $0$ to $1$.  The gap is at most polynomially small in $R$ throughout this path.  At $h=1$, we can approximate the Hamiltonian by expanding near $\theta=0$:
\be
H_{circ} \approx \frac{-1}{2mR^2} \partial_\theta^2 -\frac{R}{2} \theta^2 + {\rm const.},
\ee
which is a harmonic oscillator Hamiltonian.  The wavefunction decays exponentially for $|\theta| \geq 1/R^{3/4}$, and so the particle is localized near $\theta=0$.

Finally, we increase $m$ to infinity.  Note that increasing $m$ corresponds to decreasing the corresponding term in the Hamiltonian, and is analogous to turning off the transverse field.  Note also that because of the discretization, the gap remains at most polynomially small as $m$ is decreased for appropriate choices of $R$, as follows.  At $m=\infty$, the eigenfunctions are localized on a single state $|x,y\rangle$, and so the eigenvalues are just the different values of the potential $V(x,y)$ at appropriate discrete values of $x,y$.  One could pick $R$ badly so that $V(x,y)$ has a pair of degenerate minima in the discrete case, but it is easy to avoid this.
At the end of this process, the particle has found its ground state.

Now, let us analyze what happens with the QMC annealing procedure.
The worldline of the particle is some closed path in imaginary time.
Once $\mu$ is sufficiently negative, 
the distribution of trajectories has very small probability to include any point with radius $x^2+y^2$ which differs much from $r_{min}^2/2$ and
we expect that it takes an exponential time to transition from one winding number sector to another.

We can estimate the {\it equilibrium} winding number of a trajectory for Hamiltonian (\ref{Hcirc}) at $h=0$ and arbitrary $r_{min}^2$ as follows.   The contribution to the partition function of the trajectories with winding number $n$ is proportional to the Green's function $G(0,2\pi n;\beta)$, where this denotes the Green's function for a particle moving on the line (i.e., the universal cover of the circle) to move from $0$ to $2\pi n$ in imaginary time $\beta$.  This contribution is proportional to
$\exp(-mr_{min}^2(2\pi n)^2/2\beta))$, which decays exponentially for $n \gtrapprox \sqrt{\beta/m}/r_{min}$.  Hence, a typical trajectory will have a winding number $n \sim \sqrt{\beta}$.

Note that this equilibrium value depends upon $r_{min}$, so
the system must necessarily fall out of equilibrium, assuming that $\mu$ changes on a time scale faster than the exponential time required to change between winding number sectors. This already contradicts the conjecture that QMC will equilibrate.
A similar effect would happen if we considered an annealing protocol, such that we equilibrated the system at fixed, large, negative $\mu$ (taking the exponential time necessary to equilibrate between winding number sectors) and then increased $m$.

Now we return to the annealing protocol discussed above, where $h$ increases.  In this case, if the system is stuck with $x^2+y^2 \approx R^2/2$ so that Hamiltonian (\ref{Hcirc}) applies, what we find is that the trajectory spends most of its time near $\theta=0$, and then spends the rest of the time winding around.
Suppose, for example, the system is stuck in a sector with winding number $n \neq 0$.  We can calculate an instanton trajectory (i.e., find a minimum action trajectory with the given winding number).  The minimum action solution can be described by a particle which spends a long time near $\theta=0$ at a slow speed, then accelerates and rapidly winds around the circle, then again spends a long time near $\theta=0$, and again rapidly winds around the circle, and so on, until it winds a total of $n$ times.  We can give a quick estimate of the time it spends winding as follows.
Suppose out of a total imaginary time $\beta$, the system spends time $\beta-\tau$ close to $\theta=0$ and spends time $\tau<<\beta$ rapidly winding $n$ times around the circle.   We can estimate the optimum time $\tau$ by estimating the action for the optimal trajectory with given $\tau$ and minimizing over $\tau$.  The action to wind $n$ times in time $\tau$ is of order $mR^2n^2/\tau+h \tau R$ where the first term comes from the kinetic energy and the second term comes from the potential energy. This is minimzed at $\tau$ of order $n \sqrt{Rm/h}$.

If we fix $R$ and take $\beta$ large, since $n\sim \sqrt{\beta}$we have $\tau \sim \sqrt{\beta}$ so that the trajectory spends {\it most} of its time near $\theta=0$.  Thus, in some sense the QMC procedure does
succeed in finding the minimum as the trajectory spends most of the time in the correct place, assuming that $\beta$ is sufficiently large.

It is interesting to analyze what happens as $m$ is increased to infinity, starting at $h=1$.  The time $\tau \sim n \sqrt{Rm/h}$ estimated above increases, eventually becoming of order $\beta$.  However, for sufficiently large $m$, the trajectories stop being localized near $x^2+y^2=R^2/2$ and instead the trajectory gets localized at smaller $x^2+y^2$.  We can understand this as a balance of two terms (we will consider $h=0$ for simplicity): if the trajectory is localized at a given distance $r$, so that $x^2+y^2 \approx r$, the action to wind $n$ times in time $\beta$ is of order $mr^2n^2/\beta+V(r)$, and for large $m$ this is minimized at small $r$, so at large enough $m$ the trajectories start to have non-negligible probability to have $x=y=0$.
Thus, at large $m$, the system {\it is} able to change its winding number to zero in a time which is not exponentially long because the trajectories move to smaller $r$.

Eventually, at very large $m$, the trajectory becomes close to constant in imaginary time (each $c_i$ is close to all other $c_j$).  At this point, the QMC dynamics becomes similar to a classical Monte Carlo dynamics as the trajectory is almost determined by its value at a given time slice and so we can just study a classical Monte Carlo procedure with weight $\exp(-\beta V(x,y))$.  This is unsurprising, as at large $m$, the dynamics becomes more classical.  At large $\beta$, this classical Monte Carlo procedure is similar to a greedy algorithm, as with high probability the particle only moves to lower potential states.

The particular potential we have chosen has the property that it has only one local minimum.  As a result, the classical Monte Carlo dynamics will not get trapped and eventually the system will equilibrate at large enough $m$, being stuck just at the minimum.  We can modify the example by changing the potential near $x=y=0$, adding an additional minimum there, to trap the large $m$ dynamics to construct an example which prevents this equilibration.

To summarize: this example uses winding number as a topological invariant of trajectories to construct an annealing protocol for which the QMC does not relax rapidly.
However, for various reasons, this example is not completely satisfactory as a counter-example to the idea that QMC will succeed in finding the minimum when the annealing algorithm does.
One such reason is that if $\beta$ is sufficiently large, the QMC does produce trajectories which spend {\it most} of their time near the desired minimum at the point in the annealing protocol when $m=h=1$. 
A second reason is that the QMC algorithm has some probability of being in the trivial sector with winding number $n=0$, where it can more readily equilibrate at intermediate values of $m$ (i.e., small enough $m$ that the winding number sector is still fixed but large enough that the equilibrium distribution is dominated by the sector $n=0$) and this probability of being in the trivial sector is only polynomially small.

One interesting attempt to modify the QMC procedure to help equilibration, or at least to ameliorate the effects of being stuck in a sector with non-zero winding number, is to change $\beta$ during the procedure.  Let us again return to analyzing the protocol where $h$ is kept at $0$, but $m$ is increased.  In this case, we could allow $\beta$ to increase also, and if $\beta/m$ are in the right ratio, the system will remain in equilibrium at given $\beta$.  We will discuss ideas like this again in later examples.

\subsection{Second Example: Bouquet of Circles, ``Too Long a Word"}
The first example was based on a case where the particle was confined (up to exponentially small corrections) to a circle.  The fundamental group of the circle is $\mathbb{Z}$ and is abelian.  In this case, the equilibrium state had a winding number proportional to the square-root of $\beta$.
In the next example, we consider a case where the fundamental group is a non-abelian group.  We consider a system where the particle is confined to a space which is a bouquet of circles.  A bouquet of circles consists of several circles glued together at one point.  The fundamental group of a bouquet of $n$ circles is the free group on $n$ generators.  For simplicity, we consider a bouquet of $2$ circles.

This example makes the effects in the previous section more severe, especially in the large $\beta$ case.   This example also introduces ideas used in later examples.

The different topological sectors are described by words in the free group.
For example, if we have two generators, called $a,b$, then a possible word is $aba^{-1}  b$.  Words such as $abaa^{-1}a^{-1} b b$ can be reduced by cancelling the successive appearance of a generator ($a$) and its inverse ($a^{-1}$), and in fact $abaa^{-1}a^{-1}b=aba^{-1}b$, and the two words describe the same topological sector.  Further, we can cyclically reduce a word (cancel a generator at the start of the word against its inverse at the end) and $b^{-1}aba^{-1}bb$ describes the same sector as $aba^{-1}b$.

For a given topological sector, we can ask for the length of the short possible word that describes that sector.  This is the length of the cyclically reduced word.  Thus, for $aba^{-1}b$, the length is $4$.

A possible system with such a fundamental group has $2M-1$ basis states.  There are two sequences of states, labelled $|i,a\rangle$ and $|i,b\rangle$, where $i$ is an integer in $1,...,M-1$.  Additionally there is one other state labelled $|0\rangle$.
We will construct a Hamiltonian whose effective low energy dynamics is given by $H_{bouquet}$, defined by
\begin{eqnarray}
H_{bouquet} &=& -\sum_{i=1}^{M-2}\sum_{x \in \{a,b\}}\Bigl( |i+1,x\rangle \langle i,x |+h.c. \Bigr)
 \nonumber\\
&&+ 2 \sum_{i=1}^{M-1} \sum_{x \in \{a,b\}} |i,x\rangle \langle i,x| \\ \nonumber
&&- \sum_{x \in \{a,b\}} \Bigl( |1,x\rangle \langle 0 | + h.c. \Bigr) \\ \nonumber
&&- \sum_{x \in \{a,b\}} \Bigl( |M-1,x\rangle \langle 0 | + h.c. \Bigr) \\ \nonumber
&& + 4 |0\rangle\langle 0|.
\end{eqnarray}
The diagonal terms in the Hamiltonian are chosen so that the ground state of $H$ is an equal amplitude superposition of all states.

See Fig.~\ref{bouquet}.  The Hamiltonian is equal to the graph Laplacian on the graph shown, where we have shown the case $M=8$.  The graph Laplacian  has an off-diagonal element equal to $-1$ between any two vertices connected by an edge, and has diagonal elements equal to the degree of a given vertex (so the vertex at the middle of the figure has degree $4$ and all the others have degree $2$).
\begin{figure}
\includegraphics[width=1in]{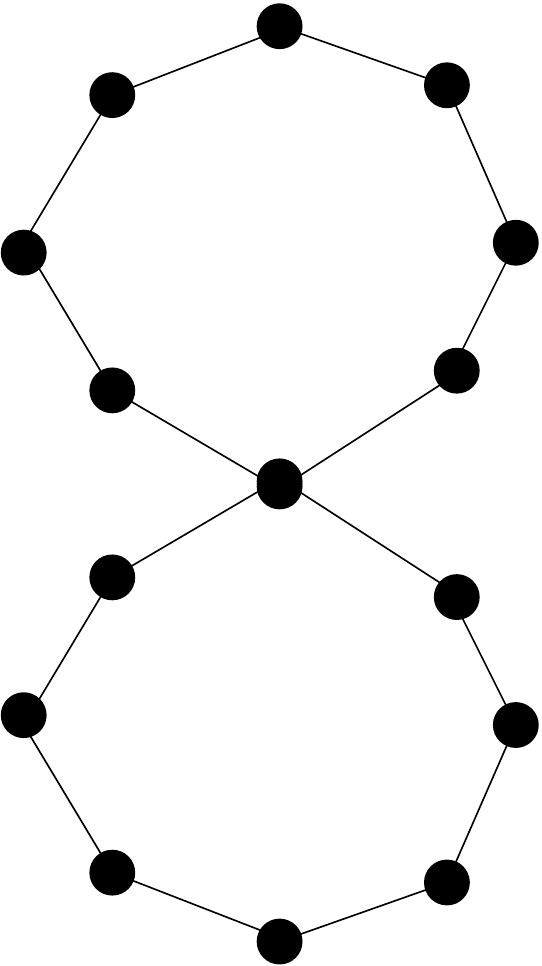}
\caption{Illustration of a graph corresponding to a bouquet of two circles for $M=8$.}
\label{bouquet}
\end{figure}


The Hamiltonian $H_{bouquet}$ will be an effective Hamiltonian for some other Hamiltonian with a larger number of states, similar to how$H_{circ}$ was an effective Hamiltonian previously.  We add additional states to the system and follow some annealing protocol so that at some point in the protocol $H_{bouquet}$ becomes a description of the effective dynamics {\it and} so that the typical trajectory created by the QMC algorithm is in a topological sector whose shortest word length is proportional to $\beta$.
We now sketch one way to do this, but the reader can certainly imagine many possible ways.
We have drawn the bouquet of circles in the plane.  We can imagine that the particle moves through the plane similarly to the previous case, and that it is some potential $V(x,y)$ that confines it to the two circles (and that also produces the appropriate diagonal and off-diagonal terms in $H_{bouquet})$).
We can imagine that we follow an annealing protocol so that initially the particle is able to move throughout some region of the plane, and that we change some parameter so that eventually the particle gets confined to the given bouquet.

In order to exponentially suppress the motion away from the two circles, we might want to take $M$ polynomially large.  Also, later we like to take $M$ polynomially large to localize certain states as we turn on a potential term $h$ in Eq.~(\ref{Hbpot}) below.  Otherwise, the particular value of $M$ is not that important, though we do need $M\geq 3$ to define the direction of winding around a circle.

Having quenched to this Hamiltonian $H_{bouquet}$, we now imagine an annealing protocol for a Hamiltonian of the form
\be
\label{Hbpot}
H=\frac{1}{m} H_{bouquet}-h |0\rangle\langle 0|.
\ee
We pick $h \geq 0$ and the term $h$ is added to produce a minimum in the potential.  The particular choice of the minimum being state $|0\rangle$ as opposed to some other state is unimportant.
Changing $m$ plays a similar role to before, and at large $m$ the equilibrium trajectory is close to constant.  
The annealing protocol from this point is: start at $m=1$ and $h=0$.  Then, increase $h$ to $1$.  Finally, increase $m$ to $\infty$.

As in the previous case, if we change $m$ but keep $h$ fixed at zero, the system must fall out of equilibrium if it is unable to transition between different topological sectors.
To analyze this, in equilibrium, at $h=0$, the length of the short possible reduced word for a typical winding number sector is as claimed above,
\be
{\rm const.} \cdot \beta/(m M^2)
\ee
This follows from the fact that a trajectory in imaginary time describes a closed random walk on this bouquet of circles.  It takes time of order $mM^2$ for the random walk to go once around a circle, and so the trajectory corresponds to a random word of length $\beta/M^2$.  However,
 for a random word of given length on the free group with two generators, the length of the corresponding cyclically reduced word is typically only a constant fraction smaller than the given random word.
So, as $m$ changes, this word length changes.

We now analyze what happens as $h$ increases at fixed $m=1$.  In this case if we take $h$ large, we can make the ground state localized near $|0\rangle$. Indeed, if we take $h$ of order unity, then the ground state is exponentially localized near $|0\rangle$, and by taking $M$ large we can exponentially suppress states of distance $\sim M/2$ from $|0\rangle$.
However, suppose we have a trajectory stuck in a topological sector with word length of order $\beta$.  Then, we can perform a similar instanton analysis as before.  We do the instanton analysis by going to a continuum limit and finding a minimum action trajectory.  We study this trajectory on the universal cover of the bouquet of circles (this cover is a tree graph), where the the trajectory travels a distance of order the word length (i.e., of order $\beta$) in time $\beta$.  So, the trajectory has an action proportional to $\beta$.  At the point that $m=h=1$ in the annealing protocol, the fraction of time that this trajectory spends near $|0\rangle$ is $\beta$ independent, differing from what we found when the target space was just a single circle where in the large $\beta$ limit the trajectory spends most of its time near $|0\rangle$.  This is a result of the word length being of order $\beta$ now rather than $\sqrt{\beta}$. as previously

To summarize, again we find problems equilibrating, and again it occurs because the system is typically stuck in a nontrivial topological sector.
We have referred to this example as ``too long a word" for this reason.

\subsection{Third Example: Bouquet of Circles, ``Too Short a Word"}
\label{ThirdOne}
Since the previous examples were both based on a situation in which the system is stuck in a nontrivial topological sector, but the minimum action sector is the trivial topological sector, one might imagine trying to modify the QMC algorithm to cause the dynamics to be stuck in the trivial sector, or at least stuck in a sector with a shorter word length than typical.
For example, we could follow an annealing protocol in which we change both $\beta$ and $s$.  One possibility would be to {\it increase} $\beta$ during the annealing protocol.
We could either increase $\beta$ while keeping the number of time slices constant (in which case the change in $\beta$ leads to a change in the statistical weights for a trajectory), or we could also change the number of time slices.  For example, one possible way to increase the number of time slices by one is to replace a trajectory $c_1,...,c_K$ by a trajectory $c_1,...,c_K,c_{K+1}$, setting $c_{K+1}=c_K$.
The goal of increasing $\beta$ in this way would be to make the system be closer to the trivial sector for the given $\beta$; that is, consider the first example of a circle.  The winding number is proportional to $\beta$.  If we equilibrate the winding number at a given $\beta$ and then double $\beta$, the winding number is now smaller than expected for the given $\beta$.

An alternate approach would be to combine this increasing in $\beta$ with a dependence of $s$ upon the time slice; we do not discuss this further as this example will be hard for such a case too.

We now consider an example for which such an approach would not work.
Consider a system with the same states as above for the bouquet of circles example, and one additional state called $|r\rangle$.
Let the Hamiltonian be
\begin{eqnarray}
H&=&mH_{bouquet} - h |0\rangle\langle 0| \\ \nonumber
&&- t |0\rangle \langle r| + h.c. \\ \nonumber
&&+E|r\rangle\langle r|.
\end{eqnarray}
That is, we have added a tunneling term $t$ connecting the state $|0\rangle$ to the state $|r\rangle$ and also added a potential term $E$ for state $|r\rangle$.
Note that these states (the bouquet and added state $|r \rangle$) are the only states we consider; this differs from the previous example where we considered a system with a larger number of basis states, and changed some parameter to confine the particles motion to
the bouquet, quenching into a topologically nontrivial sector.

Note that the ground state of $H_{bouquet}$ has energy $0$, and the first excited state of $H_{bouquet}$ has at least energy $c/M^2$, for some positive constant $c$.  It will be important in what follows to consider also the spectrum of the Laplacian on the universal cover of the bouquet of circles.  This cover is a tree.  This tree can be constructed as follows: start with a tree $T$ such that all nodes have degree $4$; that is, the root has $4$ daughters and all other nodes have $3$ daughters.  Take this tree and insert $M-1$ additional vertices in the middle of each edge, to construct a new tree $T'$; that is, replace an edge between two nodes $v,w$ by an edge from $v$ to $v_1$ then from $v_1$ to $v_2$ and so on, up until $v_{M-2}$ to $v_{M-1}$, and then an edge from $v_{M-1}$ to $w$.
The spectrum of the Laplacian on $T$ is in $[4-2\sqrt{3},\infty)$.  The exact value $4-2\sqrt{3}$ is not that important; what is important is that this value is greater than $0$.  
Similarly, the spectrum of the Laplacian on $T'$ is at least $[c'/M^2,\infty)$ for some positive constant $c'$.

We follow this annealing protocol: start at $h=t=0$, and with $E$ being large and negative, so that the ground state at the start is $|r\rangle$.
Increase $t$ from $0$ to $c/2M^2$.
Then, increase $E$ until $E$ is  ${\rm min}(c,c')/2M^2>0$.
Then, decrease $t$ to $0$.
Then, increase $h$ to $1$ and finally increase $m$ to $\infty$; this last stage of the annealing protocol is the same as in the previous example.

Once $E$ reaches its maximum value, the ground state of the quantum Hamiltonian is some superposition of $|r\rangle$ and some state on the bouquet.  Since $t$ is smaller than the energy of the first excited state of the bouquet, the state on the bouquet has most of its amplitude on the ground state of the bouquet.  This process can be understood as an avoided crossing: the energy of $|r\rangle$ crosses zero (the energy of the ground state on the bouquet), but because of the non-zero $t$ the crossing is an avoided crossing.

As $t$ is decreased, the amplitude of the ground state on $|r\rangle$ decreases, until that amplitude is zero once $t$ reaches zero.  Finally as $h$ is increased, then the amplitude of the ground state on $|0\rangle$ increases, until at the end of the protocol the ground state is exactly $|0\rangle$, as in the previous example.  Note that throughout the spectral gap is only polynomially small in $M$.

Now we consider what happens for the QMC protocol.  We number our configurations in the natural way: we write $\psi(r)=|r\rangle$ and write $\psi(k,x)=|k,x\rangle$.  Initially, we have $c_i=r$ for all $i$.  However, once $t$ becomes non-zero, the trajectory starts to spend time on the bouquet.
Suppose at some pair of time slices $i,j$, we have $c_i=c_j=r$, but for all times $k \in \{i+1,i+2,...,j-1\}$ we have $c_k \neq r$.  Then, the sequence of configurations $c_{i+1},...,c_{j-1}$ is some sequence that starts and ends at $|0\rangle$ and forms a {\it topologically trivial} path.  
To understand this, let us refer to a sequence of time slices such as $i+1,...,j-1$ such that the particle is on the bouquet and such that $c_i=c_j=r$ as an ``interval".  The length of an interval can increase or decrease under the dynamics but the topologically sector cannot change.  Initially, there are no intervals.  When a new interval is created, it is created as a single time slice, containing only one configuration, $0$.  The trajectory on this interval is topologically trivial.

As a result of this constraint on the topology of the trajectory during the intervals, the QMC dynamics is unable to distinguish between the given Hamiltonian, and a Hamiltonian where we have coupled the state $|r\rangle$ to the {\it universal cover} of the bouquet.  However, the energy $E$ is at all times less than the bottom of the spectrum of the Laplacian on the cover of the bouquet, since we have chosen $E < c'/M^2$.  So there is no avoided crossing simply because even at $t=0$ there would be no crossing.  When $t$ is reduced back to $0$, the ground state of the Hamiltonian coupling $|r\rangle$ to the universal cover of the bouquet is the state $|r \rangle$.  Hence, the QMC procedure produces the states $|r\rangle$ and does {\it not} find the correct ground state of the Hamiltonian at the end of the annealing protocol.  In fact, if the QMC is perfectly equilibrated within the trivial topological sector, then we find that the QMC has zero probability of finding the ground state.

This situation is then much worse than the previous examples.
The failure of QMC can be understood in a different fashion.  Suppose we have $t=0$.  Then, the partition function is a sum of two different quantities, one being the partition function of the bouquet and one being the contribution $\exp(-\beta E)$ from the state $|r \rangle$.  For large $\beta>>M^2$, the partition function of the bouquet approaches $1$, as the bouquet has a unique ground state with energy $0$.  However, this partition function $1$ is a sum of contributions from exponentially many different topological sectors.  Any given topological sector has a contribution which is exponentially suppressed in $\beta/M^2$.  Hence, an algorithm that is unable to equilibrate between sectors greatly underestimates the contribution of the bouquet to the partition function.

\subsection{Fourth Example: Bouquet of Circles, Open Path in Imaginary Time}
\label{FourthOne}
All of our examples so far have been based on a nontrivial fundamental group.  A natural question is whether we can resolve these problems with QMC by using open boundary conditions instead.  To motivate this, if we consider classifying closed paths in some space, then the fundamental group $\pi_1$ enters, but if we classify open paths in some space (i.e., continuous functions from an interval $[0,1]$ to some space, with no requirement that $0$ and $1$ be mapped to the same point), then the classification of such open paths is the same as $\pi_0$: if the space is path connected, then any two such open paths can be deformed into each other.

In this example, we show that such a QMC algorithm still does not necessarily work.  The example builds off our previous example. We still have a bouquet of circles, but now in addition to adding the state $|r\rangle$ as in the previous example, we also add another set of $N_G$ different states, labelled $|1,G\rangle,...,|N_G,G\rangle$, and define some expander graph whose vertices correspond to the states $|i,G \rangle$.
We let the Hamiltonian be:
\begin{eqnarray}
H&=&mH_{bouquet} - h |0\rangle\langle 0| \\ \nonumber
&&- t |0\rangle \langle r| + h.c. \\ \nonumber
&&+E|r\rangle\langle r| \\ \nonumber
&& + L_{expander}+V P_{expander} \\ \nonumber
&&-t' |r\rangle\langle 1,G|+h.c.
\end{eqnarray}
where $L_{expander}$ is the graph Laplacian on the expander, and $P_{expander}$ is a diagonal matrix equal to $1$ for states on the expander and $0$ otherwise.

First we analyze the properties of the part of the Hamiltonian that acts on $|r\rangle$ and on the expander:
\begin{eqnarray}
\label{expHam}
&&E|r\rangle\langle r| \\ \nonumber
&& + L_{expander}+V P_{expander} \\ \nonumber
&&+t' |r\rangle\langle 1,G|+h.c.
\end{eqnarray}
The Hamiltonian $L_{expander}+V P_{expander}$ has ground state of energy $V$, and then a gap to the rest of the spectrum.  For $E<V$ and $t=0$, the ground state of the Hamiltonian (\ref{expHam}) is $|r\rangle$ with energy $E$.  For $E<V$ and small $t$, the ground state is a superposition of some state on $|r\rangle$ and some state on the expander.  This state on the expander has its largest amplitude on $|1,G\rangle$, with the next highest amplitude on the first neighbors of $|1,G\rangle$, and the amplitude decreasing away as we consider further neighbors from $|1,G\rangle$.
Note that for $E<V$ and $t<<|E-V|$, the ground state has almost all of its {\it probability} on $|r\rangle$, and has only a probability of order $|t^2|/|E-V|$ on the expander.  However, we can choose $E$ close to $V$ so that the following happens: almost all of the {\it amplitude} of the state is on the expander.  That is, if we pick a site with probability proportional to the {\it amplitude} of the ground state wavefunction, then the result is very likely to be on the expander.  By taking $N_G$ large, we can make the probability of the ground state wavefunction strongly concentrated on $|r\rangle$ but the amplitude strongly concentrated on the expander.
We write $E_0(E,V,t')$ to denote the ground state energy of Hamiltonian (\ref{expHam}).

We begin the annealing protocol with $t=t'=0$, and we choose $E$ to be negative with $|E|>>1$ so that the initial state is highly concentrated (in both probability and amplitude) on $|r\rangle$.  We then make $t'$ slightly non-zero, and adjust $V$ so that the above regime holds, with the probability concentrated on $|r\rangle$ and the amplitude concentrated on the expander.  From this point on in the annealing protocol, we maintain the same difference $E-V$, adjusting $V$ to keep this difference constant whenever $E$ is adjusted.  We also keep the same $t'$.  We then follow a very similar annealing protocol to the previous example:
increase $t$ from $0$ to $c/2M^2$.
Then, increase $E$ and $V$ until $E_0(E,V,t')$ is  ${\rm min}(c,c')/2M^2$, while keeping $E-V$ constant.
Then, decrease $t$ to $0$.
Then, increase $h$ to $1$ and finally increase $m$ to $\infty$.

We can choose the difference $E-V$ so that even when we increase $t$ to $c/M^2$ and $E_0(E,V,t')$ to ${\rm min}(c,c')/2M^2$ and $t$, the ground state wavefunction has most of its amplitude on the expander.
With open boundary conditions, using Eq.~(\ref{obc}) for statistical weights, the variables $c_1$ and $c_K$ are correlated.  However, in the limit of $\beta>>\Delta$, in equilibrium the joint probability distribution approximately factorizes:
\be
P(c_1,c_K) \approx \psi_0(c_1) \psi_0(c_K),
\ee
where $\psi_0(c)$ is the amplitude of the ground state wavefunction, normalized so that $\sum_c \psi_0(c)=1$.  That is, the probability distribution of $c_1$ and $c_K$ are governed by the amplitudes of the ground state, and are very likely to be on the expander graph.

If $c_1$ and $c_K$ stay on the expander graph throughout the QMC simulation, then the topological sector cannot change, and we find the same effect as in the previous example that the QMC algorithm will be very unlikely to find the correct ground state.
So, we must ask for the probability that $c_1$ or $c_K$ does leave the expander graph.  By taking $N_G$ exponentially large, we can make this probability exponentially small.

This example shows that even open boundary conditions need not solve the problem, because we can define a Hamiltonian so that $c_1,c_K$ are ``pinned points".  That is, they are fixed to be on the expander graph, preventing a change in topological sector.
Unlike the previous examples, we need to use an exponentially large number of states, and so it will take a little more care to define gadgets for this example in the next section.

\section{Gadgets: From The Tranverse Field Ising Model to More General Hamiltonians}
We now describe how to construct transverse field Ising model Hamiltonians whose effective low energy dynamics realizes the four examples considered above.
We first consider the first three examples.  For these examples, we need to construct an effective Hamiltonian with a number of states $\sim M$ that scales as some polynomial in $N$.

The transverse field Ising systems that we consider will not necessarily be planar, and we will allow arbitrary dependence of the transverse field and Ising couplings along the annealing protocol.  We leave it as an open question whether one can construct gadgets using planar Hamiltonians where only a single parameter, the strength of the transverse field, is tuned.

Consider first a Hamiltonian
\be
H_{global}=J_{AF} \sum_{i,j} S^z_i S^z_j+h \sum_i S^z_i,
\ee
where the sum is over all $i,j$.
In an eigenstate with a total of $N_\uparrow$ of the spins up, and $N-N_\uparrow$ spins down,
the energy is $J_{AF} (2 N_\uparrow-N)^2+h(2 N_\uparrow-N)$.  By tuning $J_{AF},h$, we can arrange for this to have a minimum at any desired value of $N_\uparrow$, and with a gap of order unity to any states with a different value of $N_\uparrow$.

However, if $N_\uparrow\neq 0,N$, there are many different states with the given value of $N_\uparrow$.  We will construct an effective Hamiltonian
in this space of states that realizes the desired examples previously.  We first describe how to construct a Hamiltonian whose effective low energy dynamics realizes a hopping Hamiltonian on a circle:
\be
\label{Hring}
H=-t \sum_a |a\rangle\langle a+1|+h.c.+...,
\ee
where $a$ is periodic with period $M$, and where $...$ represent terms $t_k \sum_a |a\rangle \langle a+k|+h.c.$ for $k>1$, with $t_k$ decaying rapidly in $k$.  
Set $N=M$.  Label different sites by $i$, with $i$ being periodic with period $N$.  Pick an integer $R$. Take the Hamiltonian
\be
H=H_{global}+J' \sum_{|i-j|\leq R-1} S^z_i S^z_j-B \sum_i S^x_i.
\ee
Tune $J_{AF},h$ so that the ground state of $H_{global}$ has $N_\uparrow=R$.
First we consider the case $B=0$.  For $J'<0$, we have a short range ferromagnetic interaction.  For $|J'|<<1$, we find that the low energy states
consist of states with $N_\uparrow=1$ and with all the up spins next to each other.  That is, a state of the form $\downarrow \downarrow ... \downarrow \uparrow \uparrow.... \uparrow \downarrow \downarrow ... \downarrow$, where there is exactly one sequence of up spins with length $R$.
There are $N=M$ such different states.

Taking $B$ small compared to $J'$, we can treat $B$ in perturbation theory, and at second order, the effect is to allow the sequence of up spins to move either one to the right or one to the left.  That is, we flip a spin on one side of the sequence from up to down, shortening the sequence on one side, while flipping a down spin just past the end of the sequence on the other side to up, lengthening the sequence on that side.

Now let us explain why we introduce the parameter $R$.  At $R=1$, at second order in perturbation theory the sequence of up spins can move anywhere: there is no distinction between different sides of the sequence and two spin flips can connect any two states with exactly one up spin.  For $R=2$, the second order perturbation theory result gives us the desired effect, but at fourth order in perturbation theory, we can move the sequence anywhere.   For $R=3$, the sixth order perturbation theory allows us the sequence to move anywhere and there is a term at fourth order in perturbation theory that contributes to $t_2$, moving the sequence by two.  However, we can take $R$ to be a polynomial in $N$, with a power less than $1$, and take $B$ polynomially small, and then $t_k$ decays as an inverse polynomial of $N$ raised to the $k$-th power for $k<R$ and is negligible for $k \geq R$.

This lets us realize a Hamiltonian of form (\ref{Hring}).  We can add additional magnetic fields by a term $\sum_i h_i S^z_i$, allowing us to realize a Hamiltonian 
\be
\label{eff1}
H=-t \sum_a |a\rangle\langle a+1|+h.c. + \sum_a V(a) |a\rangle\langle a|+...,
\ee
where $V(a)=\sum_{i=a}^{a+r-1} h_i$, giving a linear map from $h_i$ to $V(a)$, and where again the
$...$ represent terms $t_k \sum_a |a\rangle \langle a+k|+h.c.$ for $k>1$.  This linear map is not invertible, so not all $V(a)$ are possible.  However, we are able in this way to approximate a slowly varying potential $V(a)$.  This allows us to approximate continuum equations such as Eq.~(\ref{Hcontin}) by some discrete approximation.  It is a slightly different discrete approximation than before, as we have hopping $t_k$ beyond the first neighbor.  However, it still approximates a continuum equation.

Using these continuum equations as a building block, we can obtain any of the first three examples.  Note that we have given gadgets to realize a discrete approximation of a one-dimensional continuum equation, while these examples require a two-dimensional continuum equation.  This is a simple modification though.

To realize the fourth example, we need to overcome the fact that $N_G$ should be exponentially large.  We use the above gadgets to construct a Hamiltonian whose low energy dynamics has the states on the bouquet of circles, the state $|r\rangle$, and the state $|1,G\rangle$ on the expander graph.  To realize additional states on the expander graph, we add an additional $N'$ number of Ising spins.  We add a magnetic field to these spins which make them prefer to be down, but we add an additional ferromagnetic interaction between each of these spins and the spins used to construct the state $|1,G\rangle$.
This ferromagnetic interaction is chosen so that if the particle is not in the state $|1,G\rangle$, then these additional $N'$ spins will prefer to be down, but otherwise they have equal energy to be up or down.  We identify the $2^{N'}$ states where the particle is in $|1,G\rangle$ and these additional $N'$ spins are in arbitrary states with the states $|1,G\rangle,...,|N_G,G,\rangle$, where $N_G=2^{N'}$.

\section{Further Obstructions With A Trivial Fundamental Group}
\label{notopo}
We have noted two possible obstructions to equilibration, one based on a nontrivial $\pi_0$ and one based on a nontrivial $\pi_1$.
A natural question is whether these are the only obstructions.  This question is not completely well-defined, since it is not completely clear in general for which space we must compute $\pi_0$ and $\pi_1$; we have only described it as the space in which the wavefunction has non-negligible amplitude in some imprecise way.  However, in this section we explore this question and identify other obstructions.
To make it simpler to define the space in this section, we will imagine that the Hamiltonian is simply the Laplacian on some space.  Then, the ground state wave function has the same amplitude everywhere.  This space could be a continuous or discrete space.  We will construct examples where this space is simply connected and the Laplacian has only a polynomially small spectral gap (polynomially small in the volume of the given space) so that {\it particles} diffuse in a polynomial time but for which diffusion of {\it paths} is exponentially slow.

We give three different examples.  The first example shows slow equilibration starting from certain initial trajectories, but unfortunately these conditions are unlikely to occur in the QMC annealing since they have very small statistical weight.  The second example shows slow equilibration for initial trajectories with large statistical weight.  This slow equilibration is due to two effects.
Although the space of trajectories is connected, so that there is a sequence of trajectories connecting any given pair, this sequence is very long {\it and} this sequence requires going through trajectories with much smaller statistical weight.
This second point is analogous to what happens in Eq.~(\ref{doublewell}) where there is a path connecting the two wells but the path has low amplitude.  What we would really like is a case where we can connect any trajectory to any other by a sequence of trajectories with only slightly smaller statistical weight, but for which diffusion is still slow because it requires a very {\it long} sequence of trajectories to get from one to the other.  In the third example, we provide this, and after giving this example we then modify it to construct a problem on which the QMC annealing protocol will likely fail.

\subsection{First Example}

In a first example, consider a surface embedded in three-dimensions that looks like a dumbbell, with a narrow neck near the middle.  This surface is intended to have the topology of a sphere, and so $\pi_1$ is trivial.
As the neck pinches off, the Cheeger constant\cite{CC} goes to zero, but we will not need to take the neck that narrow.  Let us suppose that the neck has width $\sim 1$, while two halves of the dumbbell each have linear size $\sim L$ and area $\sim L^2$.  Then, the Cheeger constant is $\sim 1/L^2$.  We will take $L$ only polynomially large.
Now, imagine a path that takes an imaginary time $\beta$ to wind around the neck $~\beta$ times.  This path can be shrunk to a point, by pulling it off the neck, but because of the large number of windings around the neck, changing the number of times it winds around the neck is exponentially suppressed in $L$.

While this example allows us to have a polynomially small gap for a particle, but have an initial path that takes an exponential time to relax to equilibrium, this is slightly unsatisfactory, because the initial path is exponentially unlikely to occur in equilibrium, having a very small statistical weight.

\subsection{Second Example: Group Presentation and Sequence of Trajectories}
We fix this problem in the second example, which is based on group theory.
Recall the concept of a presentation of a group.  This is defined by certain generators, $g_1,...$ and certain relations, $r_1,...$, where a relation specifies that a certain product of the generators is equal to the identity.  For any presentation of the group, we can define a $2$-complex, called the ``presentation complex", whose fundamental group is the group specified by that presentation\cite{Hatcher}.  This complex is constructed as follows.  There is one $0$-cell.  For each generator, we attach a $1$-cell, giving a bouquet of circles, and then for every relation we attach a $2$-cell whose boundary is attached to the $1$-cells corresponding to the generators in that relation.  While this construction gives a complex, we can embed the complex without self-intersection in five dimensions and then construct a four manifold with the same fundamental group.

Note that previously, when we considered the bouquet of circles, we had some number $M$ of discrete states on each circle.  We can imagine doing something similar here, giving a finer subdivision of the complex.  Then a trajectory in the QMC simulation is some closed path on the $1$-skeleton of this subdivided complex (we use the term ``path" rather than ``trajectory" here for consistency with terminology in graph theory and to emphasize that the path is on some complex that we have defined from a group presentation, but we use ``trajectory" later when giving an analysis of a QMC algorithm for a system including both such a complex and some additional states).

Moving the path by local updates corresponds to using $2$-cells to change this path.  However, we can avoid explicitly doing
 this subdivision and instead just speak of a path as a word, where a word is a sequence of generators.  To do this, note that we can deform any path in the subdivided complex onto the $1$-skeleton of the original complex, by deforming it within each $2$-cell of the original complex.  We can arbitrarily decide some way to do this for each $2$-cell of the original complex.  Then, using a sequence of local updates changes the undeformed path, and may potentially change the deformed path.
This change in the deformed path corresponds simply to some relation (or its inverse).
So, for simplicity, we speak of a path as a word in the generators.  Local updates correspond to using a relation, or its inverse, or cancelling a generator against its inverse (so $g_a g_a^{-1}$ can be cancelled) or the inverse of this cancellation.

It is known\cite{Adjan,Rabin} that it is undecidable whether or not a finite group presentation is trivial.  Note that a group presentation will be trivial if and only if for every generator $g_a$, we can take a path that winds exactly once around the $1$-cell corresponding to that generator and deform that path to the identity.
If there were some sufficiently slowly growing bound (indeed a bound by any computable function) on how long this sequence of paths might be, then we could decide whether or not the group presentation was trivial, simply by trying all sequences shorter than a certain length.  Thus, the length of these paths must increase very rapidly for certain presentations of the trivial group.
Note that this example improves on the previous example, in that it might take very long for a path that simply winds once around a given $1$-cell to turn into the identity, and this path winding once around a given $1$-cell has non-negligible statistical weight, as opposed to the previous case where the path that wound many times around the neck of the dumbbell had very small statistical weight.

So, this gives us a sequence of examples where the diffusion of the paths becomes much slower than the diffusion of particles, despite having trivial $\pi_1$. 
There is still one unsatisfactory aspect of this example.  Namely, the sequence of paths to deform a given generator to the identity might involve increasing the length of the word by a large amount.  That is, the sequence of paths will be of the form $w_1,w_2,....,w_K$, where $w_1$ is one of the generators, $w_K$ is the identity, and $w_2,...,w_{N-1}$ are some other words in the group, and relations are used to move from $w_i$ to $w_{i+1}$, and where not only will $K$ become large, but possibly also the length of some of the $w_i$ will also become large.
As this length becomes large, the statistical weight of the corresponding path $w_i$ goes to zero.  This does of course even further slow the diffusion of paths, but we might be interested in an example where this weight does not go to zero as rapidly.

So an interesting question is whether there is a sequence of presentations of the trivial group for which we can find a sequence of paths from every generator to the identity, and for which the minimum length $K$ of such a sequence diverges rapidly, but for which every word $w_i$ in that sequence has at most polynomial length.
Note that in this case, $K$ is bounded by an exponential of a polynomial, because there are only that many possible words of polynomial length.

\subsection{Third Example}
Consider the group presentation with generators $g_1,...,g_n$ and relations
$g_1 g_2^{-2}, g_2 g_3^{-2}, ... , g_{n-1} g_n^{-2}, g_n$.  This is a presentation of the trivial group, as follows.  The last relation is the generator $g_n$.  We can multiply the second-to-last relation by $g_n^2$ to get the relation $g_{n-1}$ and then multiply the third-to-last relation by $g_{n-1}^2$ to get the relation $g_{n-2}$, and so on.

Further, given any word $w$, we can turn that word into the trivial word by a sequence of words using the relations without any word in that sequence having length more than $n$ longer than the original word.  Let us illustrate how to do this for the word $w=g_1$.  Then consider the following sequence of words: 
\begin{eqnarray}
\nonumber
&&
g_1, \; g_2^2, \; g_2 g_3^2, \; g_2 g_3 g_4^2,\;  ...,\; g_2 ... g_{n-1} g_n^2,\; g_2 ... g_{n-1},\\ \nonumber
&& g_2 ... g_{n-2} g_n^2, \; g_2 ... g_{n-2}, \; g_2 ... g_{n-3} g_{n-1}^2, \; g_2 ... g_{n-3} g_{n-1} g_n^2,
\end{eqnarray}
and so on.  At no point in this sequence does the length of the word get longer than $n$.  The algorithm to generate this sequence starting from $g_1$ is always to take the last generator in the sequence and if it is $g_n$ then remove that generator, but if it is some other generator $g_a$ to replace it with $g_{a+1}^2$.  We can use the same algorithm for any starting generator $g_a$.  Then, for any given starting word that is not a single generator, we can apply this sequence to turn the last generator in the original word into the identity, then the second-to-last, and so on.

The above sequence takes an exponential number of moves to turn $g_1$ into the identity.  We can prove that any sequence requires an exponential number of moves as follows.
Consider any word written as $g_{a_1}^{p_1} g_{a_2}^{p_2} ... g_{a_m}^{p_m}$, for some integer $m$.  Define the weight of the word to be
\be
\sum_{i=1}^m p_i (2^{n-a_i+1}-1).
\ee
This weight is invariant under conjugating the word by any generator, or by inserting a generator and its inverse anywhere, as one adds generators to the word with both positive and negative powers when doing this and these terms cancel in the weight.  Using a relation can change the weight by $\pm 1$.
The word $g_a$ has a weight $2^{n-a+1}-1$, so it takes at least $2^{n-a+1}-1$ moves to turn that word into the identity.
Note also that even if we add in the relations $g_a g_b g_a^{-1} g_b^{-1}$, specifying that all generators commute, we still have the same lower bound on the number of moves.

\begin{figure}
\includegraphics[width=2in]{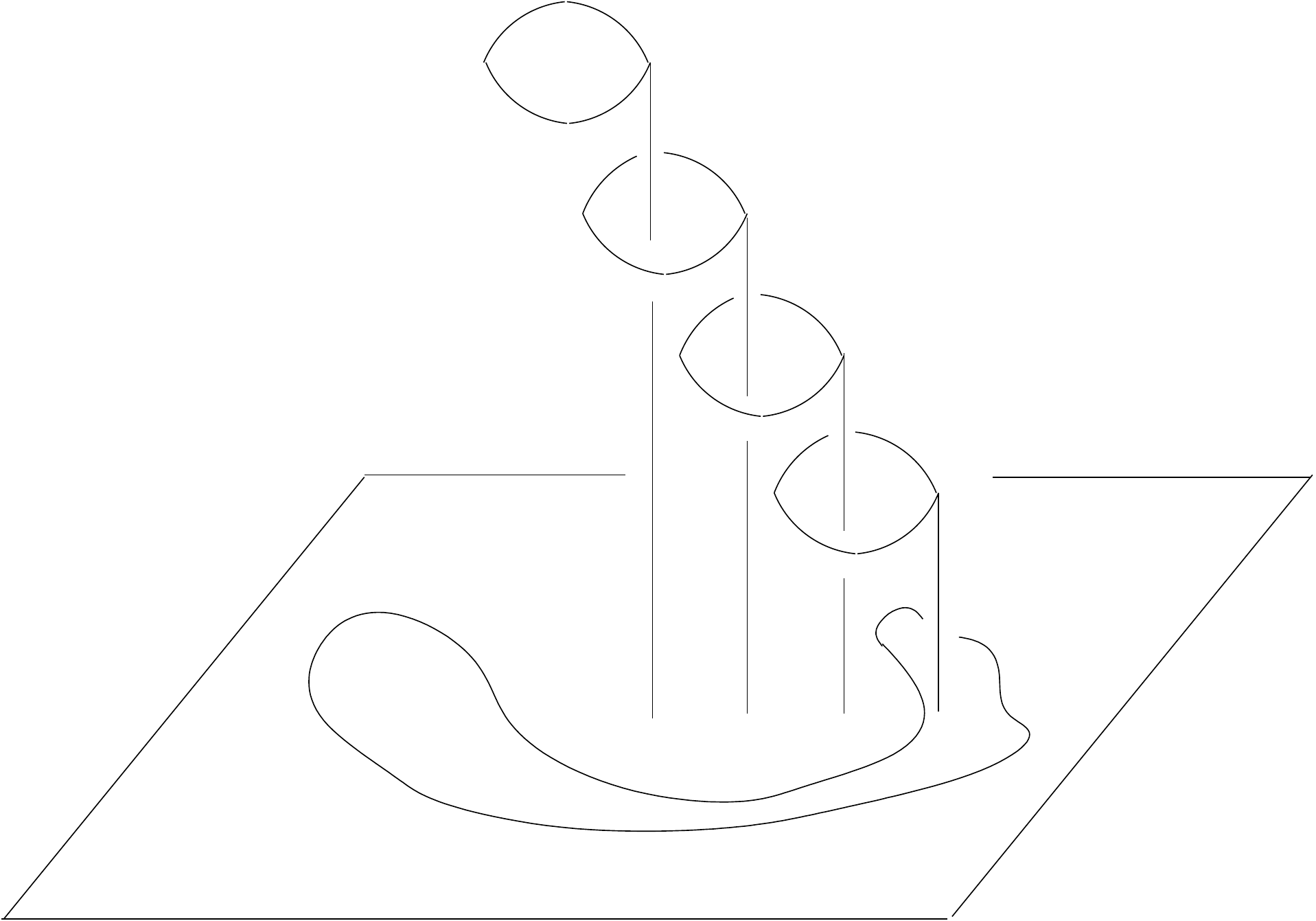}
\caption{The Desperado puzzle.  The puzzle rests on a flat plane, shown.  The objective is to separate the string from the wood pillars.   A ``topologist's solution" to the puzzle is to deform the pillars instead of the string, which makes it apparent that the string is in a topologically trivial configuration.  However it requires an exponential time to move the string between different configurations.}
\label{desp}
\end{figure}

Interestingly, this group presentation was inspired by considering a puzzle made of wood and string, called the Desperado puzzle\cite{desperado}.  See Fig.~\ref{desp}.  A similar effect occurs there, though that puzzle involves a space that is not simply connected.

Now, having constructed this group presentation, we construct an example where QMC will fail to find the ground state.  This is similar to the ``third example" previously using a bouquet of circles in subsection \ref{ThirdOne}, but here we have a trivial fundamental group.  Let ${\cal C}$ be the complex constructed above corresponding to this group presentation.  Let ${\cal C}'$ be a copy of ${\cal C}$.  We attach the two complexes to each other at a point by identifying the $0$-cell in ${\cal C}$ with that in ${\cal C}'$.  This corresponds to a group presentation with generators $g_1,...,g_n,g'_1,...,g'_n$ and relations
\begin{eqnarray}
\nonumber
&& g_1 g_2^{-2}, g_2 g_3^{-2}, ... , g_{n-1} g_n^{-2}, g_n,
\\ \nonumber
&& g'_1 (g'_2)^{-2}, g'_2 (g'_3)^{-2}, ... , g'_{n-1}( g'_n)^{-2}, g_n.
\end{eqnarray}  
We can also add in relations  $g_a g_b g_a^{-1} g_b^{-1}$ and  $g'_a g'_b (g'_a)^{-1}( g'_b)^{-1}$ if we choose, but we will not add in any relation $g_a g'_b g_a^{-1} (g'_b)^{-1}$.
Having defined this complex, we then subdivide the complexes and define a quantum Hamiltonian $H_{complex}$.
Each basis state that $H_{complex}$ acts on corresponds to some $0$-cell in the subdivided complex.  Let $|0\rangle$ be the $0$-cell in the subdivision that
corresponds to the $0$-cell in the original complex.
Then, add in an additional state $|r \rangle$ and consider the Hamiltonian
\begin{eqnarray}
H &=& mH_{bouquet} - h |0\rangle\langle 0| \\ \nonumber
&&- t |0\rangle \langle r| + h.c. \\ \nonumber
&&+E|r\rangle\langle r|.
\end{eqnarray}

We follow the same annealing protocol as in the ``third example" with the bouquet of circles, though the specific values we pick for $h,t,E$ will be different.
Consider what happens for the QMC protocol.  Initially, we have $c_i=r$ for all $i$, corresponding to the state $|r \rangle$.  However, once $t$ becomes non-zero, the trajectory starts to spend time on the bouquet.
Suppose at some pair of time slices $i,j$, we have $c_i=c_j=r$, but for all times $k \in \{i+1,i+2,...,j-1\}$ we have $c_k \neq r$.  Then, the sequence of configurations $c_{i+1},...,c_{j-1}$ is some sequence that starts and ends at $|0\rangle$ and forms a path.  This path corresponds to a word in the generators.  Let us write this word as $w=w_1 w'_1 w_2 w'_2 ...$ where $w_a$ is a word in the generators $g_1,...,g_n$ and $w'_a$ is a word in the generators
$g'_1,...,g'_n$.  Now, every path is topologically trivial, because the presentation describes a trivial group.  However, certain paths, namely those in which any of the $w_a$ or $w'_a$ have exponentially high weight cannot occur without taking an exponential length of time in the QMC.  Arbitrarily, let us say that the weight is high if it is more than $2^{n/2}$.
The number of words of length $l$ can be very crudely estimated as $(4n)^l$, since each generator in the word can be any of the generators $g_a$ or $g'_a$ or any of the inverses of these generators.  Taking into account the ability to reduce a word by cancelling a generator against its inverse (say, $g_a g_a^{-1}$) slightly reduces the base of the exponent, by an amount which is $o(n)$.  However, if we restrict to words in which no $w_a$ or $w'_a$ has weight more than $2^{n/2}$ appears, then the number of such words is $c^l$ for some $c<4n$ (in fact, $c/4n$ converges to some number less than $1$ in the limit of large $n$).  So there are exponentially fewer such words.

So, by restricting to these configurations where each $w_a$ or $w'_a$ does not have high weight exponentially reduces the statistical weight of the sum over paths corresponding to words $w$ of length $l$.  Effectively, this means that the QMC algorithm sees that the spectrum of states in the complex does not start at zero energy (the true ground state energy of the Laplacian on the complex) but in fact starts at some higher energy.  So, similar to before, we can find an annealing protocol
for which the energy $E$ is greater than zero (so that the quantum annealing procedure finds the ground state on the complex) but the energy $E$ is less than the bottom of the spectrum that the QMC sees (so that the QMC algorithm instead produces the state $|r\rangle$ at the end of the annealing protocol).

We can also take this example and do something similar to what we did in the ``fourth example" previously in subsection \ref{FourthOne}, and add an expander graph in addition to the state $|r\rangle$ to construct an example which has a trivial fundamental group for which QMC fails even with open boundary conditions.

\section{Analytic Results on Equilibration}
We now provide some analytic results on equilibration, including both positive and negative results.
Consider a continuous time Markov dynamics with transition rates from state $c$ to state $d$ given by:
\be
i \neq j \; \rightarrow \;
T_{dc}=J_{dc} \exp(-(E_d-E_c)/2),
\ee
where $J$ is a symmetric matrix.  
Then,
\be
T_{cc}=-\sum_{d \neq c} J_{dc}\exp(-(E_d-E_c)/2).
\ee
These rates satisfy detailed balance with stationary distribution $P_c\propto \exp(-E_c)$.

To analyze equilibration, we make a non-unitary change of basis to transform $T$ to a symmetric matrix $L$ by right-multiplying by $\exp(-E_c/2)$ and left-multiplying by the inverse matrix.  The resulting matrix $L$ has matrix elements
\begin{eqnarray}
c \neq d \; \rightarrow \; L_{dc} &=&  \exp(E_d/2) T_{dc}\exp(-E_c/2) \\ \nonumber
&=& J_{dc}.
\end{eqnarray}

This matrix is symmetric and real and has the same eigenvalues as $T$.
Write $H=-L$.  Then, $H$ has at least one zero eigenvalue, and a bound on the second eigenvalue gives an upper bound on equilibration time (the equilibration time will be bounded by the inverse of this eigenvalue times the logarithm of the size of this matrix, and for our purposes this logarithm grows only polynomially in $K,N$ and $\beta$).
Note then that in this section, we are using $H$ to refer to a Hamiltonian defined from $H=-L$.  Previously we used $H$ to refer to a quantum Hamiltonian.
To avoid ambiguity, in this section we will write $H^{quantum}$ to refer to the quantum Hamiltonian considered previously.

For local Monte Carlo moves for our problem with statistical weight (\ref{statweight}), 
we write the statistical weight as an exponential of an energy (for our particular choice of statistical weight, some energies might be infinite as some transitions are forbidden but in this section we will simply replace those with very large energies).
Then, we can write $H$ as a Hamiltonian for a one-dimensional spin chain, with $K$ spins, one spin per time slice.
We write $c_i$ for the value of the configuration on the $i$-th time slice.
We take $J=\sum_i J_i$, where $J_i$ is some symmetric matrix which is supported on time slices $i-1,i,i+1$ and does not change the value of $c_{i-1}$ or $c_{i+1}$.
 Let $J_i(c'_i,c_i|c_{i-1},c_{i+1})$ denote matrix elements of $J_i$ from $c_i$ to $c'_i$ for given values of $c_{i-1},c_{i+1}$. 
We write $E$ as $E=\sum_i E_{i,i+1}(c_i,i+1)$.
Then we can write $H=\sum_i H_i$, where $[H_i,H_j]=0$ for $|i-j|>1$.  Note that $H_i$ and $H_{i+2}$ both have support on the $i+1$-th time slice, but they still commute.

\subsection{Lower Bounds on Equilibration Time}
We write an orthonormal basis of states as $|c_1,...,c_K\rangle$.
Consider this orthonormal set of states:
\begin{eqnarray}
|c_1\rangle & \equiv & \sum_{c_2,c_3,...,c_K} Z(c_1)^{-1/2} \exp\Bigl(-\frac{E(c_1,...,c_K)}{2}\Bigr) \\ \nonumber
&& \times | c_1,...,c_K\rangle,
\end{eqnarray}
where $E(c_1,...,c_K)=\sum_{i=1}^K E_{i,i+1}(c_i,c_{i+1})$ and
\be
Z(c_1)=\sum_{c_2,c_3,...,c_K}\exp(-E(c_1,...,c_K)).
\ee
Note that $c_1$ is not summed over in either equation.

Then for $c'_1 \neq c_1$,
\begin{eqnarray}
&&\langle c'_1 | H | c_1 \rangle \\ \nonumber
&=&
-Z(c_1)^{-1/2} Z_2(c'_1)^{-1/2} 
\sum_{c_2,c_3,...,c_K}
J(c'_1,c_1|c_K,c_2) \\ \nonumber && \times
\exp\Bigl(-\frac{E(c_1,...,c_K)+E(c'_1,...,c_K)}{2}\Bigr) \\ \nonumber
&\equiv & -\tilde J(c'_1,c_1).
\end{eqnarray}
Also,
\be
\label{shows}
\langle c_1 | H | c_{1} \rangle
=
\sum_{c'_1 \neq c_1} \tilde J(c'_1,c_1) \sqrt{\frac{Z(c'_1)}{Z(c_1)}}.
\ee
Let $\tilde H$ be the operator $H$ projected into the space of states $|c_1\rangle,...,|c_K\rangle$, so the above equations define the matrix elements of $\tilde H$.
Eq.~(\ref{shows}) shows that $\tilde H$ can still be regarded as arising from equilibration of some Markov dynamics with an effective
 $\tilde J(c'_1,c_1)$ playing the role of $J$ and with $-\log(Z(c_1)$ playing the role of the energy $E$.

Note that the Markov process is defined by an energy $E$ and a matrix $J$.  If we have two different Markov processes, both using the same energy $E$ but one using a matrix $J$ and the second using a matrix $J'$ such that every matrix element of $J'$ is greater than or equal to the corresponding matrix element of $J$, the corresponding Hamiltonians obey the inequality that
$H' \geq H$.  Above, we have an $\tilde H$ which is the Hamiltonian corresponding to a Markov dynamics with energy $\log(Z(c_1)$ and matrix $\tilde J$.
Let us use $D$ to denote the matrix with the same off-diagonal matrix elements as $H^{quantum}$ but which is zero on the diagonal.
Define a matrix $\tilde J'=-C D$ where $C$ is the smallest constant such that every entry of $\tilde J'$ is greater than or equal to the corresponding entry of $\tilde J$.

Let $E_0$ be the ground state energy of $H^{quantum}$.  We now show that in the limit of large $\beta$, 
\be
\label{sameH}
\tilde H'=C(H^{quantum}-E_0)+O(\exp(-\beta \Delta) d^N),
\ee
where $N$ is the number of sites and $d$ is the dimension on a single site.
By construction, the off-diagonal matrix elements of $\tilde H'$ are the same as the off-diagonal matrix elements of $H^{quantum}$, up to multiplication by a factor of $C$.  Now consider the diagonal matrix elements.  Let $\psi_0(c)$ be the ground state wavefunction.
In the large $\beta$ limit, $Z(c_1)=|\psi_0(c_1)|^2+O(\exp(-\beta \Delta) d^N)$.  This error term $O(...)$ becomes negligible for $\beta=\poly(N)$.  In many cases, this error bound $O(\exp(-\beta \Delta) d^N)$ is in fact a large over-estimate of the true error.
Then, we find that
\begin{eqnarray}
&& \langle \psi(c)| \tilde H' | \psi(c) \rangle \\ \nonumber
&=& -C\sum_{c'\neq c} \langle \psi(c')| H^{quantum}| \psi(c) \rangle  \frac{\psi_0(c')}{\psi_0(c)} \\ \nonumber
&&+O(\exp(-\beta \Delta) d^N) \\ \nonumber
&=& C (\langle \psi(c) | H^{quantum} | \psi(c) \rangle-E_0)+O(\exp(-\beta \Delta) d^N),
\end{eqnarray}
where we used the fact that
$\psi_0(c)$ is an eigenstate of $H^{quantum}$, so that $\sum_{c'\neq c} \langle \psi(c')| H^{quantum}| \psi(c) \rangle \psi_0(c')=(E_0-\langle \psi(c) | H^{quantum} | \psi(c) \rangle)\psi_0(c)$.
This shows Eq.~(\ref{sameH}).

So, the lowest eigenvalue of $\tilde H'$ equals $C \Delta$.  Since $\tilde H'$ upper bounds $\tilde H$, this gives an upper bound to the lowest eigenvalue of $\tilde H$.   
Note that the matrix elements of $\tilde J$ are upper bounded by the corresponding matrix elements of $J$.  So, $C$ is upper bounded by the smallest constant $c$ such that the matrix elements of $-c D$ are greater than or equal to the corresponding matrix elements of $J$.  For a natural choice of local updates in which we update from state $c$ to $c'$ if there is a term in the Hamiltonian connecting those two states, this constant $c$ will be of order unity.
So, in this case, a small eigenvalue of $H^{quantum}$ implies a small eigenvalue of $H$ and hence a slow relaxation.

This result is perhaps not surprising, though, as for some values of $\beta$ we can also derive this result using exponential decay of correlations.  If $H^{quantum}$ has a small gap $\Delta$ compared to $\beta$, then the equilibrium state has long-range correlation in imaginary time.
Using our interpretation of $H$ as a Hamiltonian for a spin chain, this corresponds to a long-range correlation in the spin chain.  Using the exponential decay of correlations\cite{expdecay} for a gapped $H$, we can bound the gap of $H$.  However, the explicit relation between $\tilde H'$ and $H^{quantum}$ here may be interesting.  Further, if $\Delta$ is small but $\beta \Delta$ is large, then there may {\it not} be long-range correlations in imaginary time for {\it diagonal} operators.  Consider a simple Hamiltonian on a single spin coupled to a magnetic field $h$: $H^{quantum}=h S^z$, with gap $\Delta=h>0$.  Then, for $\beta>>\Delta$, the QMC probability distribution is dominated by the trajectory with the spin pointing down in all time slices so there are in fact no long-range correlations.

\subsection{Lower Bounds on Eigenvalue}
We need some preliminaries first.  We first derive some lower bounds for eigenvalues of one-dimensional quantum spin chains in Eq.~(\ref{final}), which can be understood as some sort of renormalization procedure.  We then apply to the specific spin chain arising from the QMC dynamics.

We begin with the following result:
let $P_1,P_2$ be projectors.  Then, for any real number $x$ with $0 \leq x \leq 1$ we have:
\be
\label{P1P2bnd}
x P_1+P_2 \geq \frac{x}{1+x} (1-P_1) P_2 (1-P_1).
\ee

To prove Eq.~(\ref{P1P2bnd}), note that by Jordan's lemma we can find a basis such that both $P_1,P_2$ become block diagonal, with the blocks having size either one or two.  We prove this equation for each block.  In a block of size one, then $P_1,P_2$ in that block equal $0$ or $1$, giving $4$ different possibilities for that block.  One can explicitly check all four possibilities.  Now consider a block of size two.  We can write $P_1$ in this block as
\be
\begin{pmatrix}
1 & 0 \\
0 & 0
\end{pmatrix},
\ee
and $P_2$ in this block as
\be
\begin{pmatrix}
\cos^2(\theta) & \cos(\theta) \sin(\theta) \\
\cos(\theta) \sin(\theta) & \sin^2(\theta)
\end{pmatrix}.
\ee
Let $y=\frac{x}{1+x}$.
So, we need to check that
\be
\begin{pmatrix}
x+\cos^2(\theta) & \cos(\theta) \sin(\theta) \\
\cos(\theta) \sin(\theta) & (1-y) \sin^2(\theta)
\end{pmatrix}
\geq 0.
\ee
This matrix is Hermitian.  For the given choice $x$, it has a positive trace.  So, it suffices to check that the determinant is positive.  The determinant
equals
\be
\Bigl[ (x+\cos^2(\theta))(1-y)-\cos^2(\theta) \Bigr] \sin^2(\theta).
\ee
Note that $\sin^2(\theta) \geq 0$.  The quantity in brackets is equal to $x(1-y)-y \cos^2(\theta)$.  Since $\cos^2(\theta) \leq 1$, this quantity in brackets is greater than or equal to $x(1-y)-y$, which for the given choice of $y$ is equal to $0$.

For Eq.~(\ref{P1P2bnd}), it follows that
\be
\label{P1P2bnd2}
P_1+P_2 \geq \alpha P_1 + \frac{1-\alpha}{2-\alpha} (1-P_1) P_2 (1-P_1),
\ee
for all $0 \leq \alpha \leq 1$
or that
\be
\label{P1P2bnd3}
\frac{1}{2}P_1+P_2 \geq \frac{\alpha}{2} P_1 + \frac{1-\alpha}{3-\alpha} (1-P_1) P_2 (1-P_1).
\ee

Consider a one-dimensional Hamiltonian of $K$ sites for some $K$,
\be
H=P_1+P_2+...+ P_K,
\ee
where each $P_i$ is a projector and where
\be
\label{comm}
|i-j| >1 \; \rightarrow \; [P_i,P_j]=0.
\ee
Assume for simplicity that $K$ is even.  We identity the $K$-th and the $0$-th sites and the distance in the above equation should be taken with this periodic identification.
Note that one way for Eq.~(\ref{comm}) to hold is if $P_i$ acts only on the $i$-th and $i+1$-th site.  However, later we will consider a more general way in which this equation can hold.

Then,
\be
H=\sum_{i=0,2,...} \Bigl(\frac{1}{2} P_{i} + P_{i+1} + \frac{1}{2} P_{i+2}\Bigr),
\ee
where the sum is over all even $i$ less than $K$.
Write $Q_{i}=1-P_{i}$.
Note that
$\frac{1}{2} P_{i} + \frac{1}{2} P_{i+2} \geq \frac{1}{2} (1-Q_{i}Q_{i+2})$ and also
$(1-Q_{i}Q_{i+2}) \geq \frac{1}{2} (P_{i} + P_{i+2})$.  (This is the place where we use Eq.~(\ref{comm})).
So, by Eq.~(\ref{P1P2bnd3}),
\begin{eqnarray}
&& \frac{1}{2} P_{i} + P_{i+1} + \frac{1}{2} P_{i+2} \\ \nonumber
&\geq & \frac{1}{2}(1-Q_{i}Q_{i+2})+P_{i+1} \\ \nonumber
& \geq & \frac{\alpha}{2} (1-Q_{i} Q_{i+2}) + \frac{1-\alpha}{3-\alpha} Q_{i,i+1}Q_{i+2} P_{i+1} Q_{i+2} Q_{i} \\ \nonumber
& \geq & \frac{\alpha}{4} (P_{i}+P_{i+2}) + \frac{1-\alpha}{3-\alpha} Q_{i}Q_{i+2} P_{i+1} Q_{i+2} Q_{i}.
\end{eqnarray}

So,
\be
\label{final}
H \geq \frac{\alpha}{4} H_{even}+\frac{1-\alpha}{3-\alpha} \tilde H_{odd},
\ee
where
\be
H_{even}=\sum_{i=0,2,...} P_{i}
\ee
and
\be
\tilde H_{odd} = \sum_{i=0,2,...} Q_{i}Q_{i+2} P_{i+1} Q_{i+2} Q_{i}.
\ee
Note that $H_{even}$ and $\tilde H_{odd}$ commute with all $Q_i$.

We now apply these results to equilibration.
Let $\lambda_0$ be minimum over $i$ of the smallest non-zero eigenvalue of $H_i$.  This quantity $\lambda_i$ characterizes how quickly a time-slice can equilibrate to its neighboring time slices.  So,
$H \geq \lambda_0 \sum_i P_i$,
where $P_i=1-Q_i$ and $Q_i$ projects onto the zero eigenspace of $H_i$.
The operator $P_i$ is supported on sites $i-1,...,i+1$ but it does not change the value of $c_{i-1}$ or $c_{i+1}$.

Let us assume that for any given $c_{i-1}$ and $c_{i+1}$ that $P_i$ has a unique ground state on sites $i-1,i,i+1$.  Then we can write a basis for the
eigenspace in which all even $Q_i$ are equal to $1$ by states of the form
\be
\label{states}
\sum_{c_2,c_4,,...} \phi_2(c_1,c_2,c_3) \phi_4(c_3,c_4,c_5) ...
|c_1,c_2,...,c_K \rangle.
\ee
Note that $c_1,c_3,...$ are {\it not} summed over and there is exactly one such eigenstate per choice of $c_1,c_3,...$.  Here we have defined
\begin{eqnarray}
&&\phi_i(c_{i-1},c_i,c_{i+1})
=\frac{\exp\Bigl(-\frac{E_{i-1,i}(c_{i-1},c_i)+E_{i,i+1}(c_i,c_{i+1})}{2}\Bigr)}{Z_i(c_{i-1},c_{i+1})^{1/2}},
\nonumber
\end{eqnarray}
where
\begin{eqnarray}
&& Z_i(c_{i-1},c_{i+1})\\\nonumber &=&\sum_{c_i} \exp(-(E_{i-1,i}(c_{i-1},c_i)+E_{i,i+1}(c_i,c_{i+1}))).
\end{eqnarray}
We write the state in Eq.~(\ref{states}) as $|c_1,c_3,...\rangle$ in a slight abuse of notation (if both odd and even $c_i$ appear in a ket then it is a state of the whole system, but if only odd $c_i$ appear in the ket then it is a state of form (\ref{states})).

Now compute $\tilde H_{odd}$ for this spin chain.
The operator
$Q_{i}Q_{i+2} P_{i+1} Q_{i+2} Q_{i}$ is supported on sites $i-1,i,...,i+3$.  We evaluate its matrix element between two states $|c_1,c_3,...\rangle$ and $|c'_1,c'_3,...\rangle$ that agree on all sites except site $i+1$.  Since this matrix element only depends upon $c_{i-1},c_{i+1},c'_{i+1},c_{i+3}$, we will not write any other $c_j$.  For notational simplicity, let us fix $i=2$.  Then
for $c'_3 \neq c_3$ we have
\begin{eqnarray}
&&
\langle c_1,c'_3,c_5 | P_{i+1} | c_{1}, c_{3},c_{5} \rangle \\ \nonumber
&=&
-\Bigl(Z_2(c_1,c_3)
Z_4(c_3,c_5)\Bigr)^{-1/2}  \Bigl(Z_2(c_1,c'_3) Z_4(c'_3,c_5)\Bigr)^{-1/2} \\ \nonumber
&& \times
\sum_{c_2,c_4}
J(c'_3,c_3|c_2,c_4) \\ \nonumber && \times \Bigl\{
\exp\Bigl(-E_{1,2}(c_{1},c_2)-\frac{E_{2,3}(c_2,c'_3)+E_{2,3}(c_2,c_3)}{2}\Bigr) \\ \nonumber 
&& \times
\exp\Bigl(-E_{4,5}(c_{4},c_5)-\frac{E_{3,4}(c'_3,c_4)+E_{3,4}(c_3,c_4)}{2}\Bigr) \Bigr\} \\ \nonumber
&\equiv & -\tilde J(c'_3,c_3|c_1,c_5).
\end{eqnarray}
Also
\begin{eqnarray}
&&\langle c_1,c_3,c_5 | P_{i+1} | c_{1}, c_{3},c_{5} \rangle \\ \nonumber
&=&
\sum_{c'_3 \neq c_3} \tilde J(c'_3,c_3|c_1,c_5) \sqrt{\frac{Z_2(c_1,c'_3) Z_4(c'_3,c_5)}{Z_2(c_1,c_3) Z_4(c_3,c_5)}}.
\end{eqnarray}

This procedure can be regarded as a kind of renormalization procedure.  We have a new $\tilde H_{odd}$, which acts on a spin chain with half as many sites.  This $\tilde H_{odd}$ can still be regarded as arising from equilibration of some Markov dynamics with a ``renormalized"
 $\tilde J(c'_3,c_3|c_1,c_5)$ playing the role played by $J(c'_3,c_3|c_2,c_4)$ and with $-\log(Z_2(c_1,c_3))$ playing the role of the term in the energy which depends upon a pair of neighboring sites.

The calculation in this subsection is very similar to the one in the previous subsection, in that in both cases we calculated a renormalized $\tilde J$ from the original $J$, although here we use it to give an upper bond and in the previous case we used it to give a lower bound.  Eq.~(\ref{final}) lower bounds $H$ in terms of $H_{even}$ and $H_{odd}$.  We can iterate Eq.~(\ref{final}).  That is, we can apply this equation to the Hamiltonian $\tilde H_{odd}$ and so on defining a renormalization procedure that halves the number of sites at each step.
The constant factor $\frac{1-\alpha}{3-\alpha}$ leads to an exponential decrease in the Hamiltonian from one step to the next.  We can pick a small value of $\alpha$ to make this constant close to $1/3$.  Since the number of renormalization steps is only logarithmic in $K$, this constant factor produces only a polynomial decrease in $H$.

So, we can lower bound the lower non-zero eigenvalue of $H$ by the polynomial factor from $(\frac{1-\alpha}{3-\alpha})^{\log_2(K)}$, multiplied by the product of $\lambda_0$ over steps.  The bond becomes ineffective if this $\lambda_0$ becomes small.
To give an example of how $\lambda_0$ can become small, consider the following toy model.  We have configurations $c$ labelling angles on a circle, so $0 \leq c < 2\pi$.  Suppose that the statistical weight vanishes if $|c_i-c_{i+1}| > 0.26 \pi$ for any $i$ and is equal to $1$ otherwise.  Then, a possible trajectory is $c_1=0,c_2=\pi/4,c_3=\pi/2,c_4=3\pi/4,c_5=\pi,...$.  Another possible trajectory is $c_1=0,c_2=-\pi/4,c_3=-\pi/2,c_4=-3\pi/4,c_5=\pi,...$.  Both trajectories have the same $c_1$ and $c_5$, but if those $c_1,c_5$ are held fixed then local updates cannot move from one trajectory to the other.  In this case, we find that $\tilde J'$ has no matrix elements between $c_3=\pi/2$ and $c_3=-\pi/2$ because if we choose a value of $c_2$ such as $\pi/4$ which is consistent with $c_3=\pi/2$, then it is inconsistent with $c_3=-\pi/2$.

Note also that it may not be necessary to run the renormalization until $\log(K)$ steps, if $K$ is sufficiently large compared to $\beta$.  After a large number of steps, the Hamiltonian may approximately ``decouple" $H$ into a sum of single site Hamiltonians, as the statistical weight will be a product of $|\psi_0(c_i)|^2$ over sites that remain.

\section{Discussion}
The work in this paper attacks the question of the computational complexity of the adiabatic algorithm with no sign problem.  Without the ``no sign problem" restriction, Ref.~\onlinecite{equivalence} shows that the adiabatic algorithm is equivalent to the circuit model.  With this restriction, a natural conjecture is that the adiabatic algorithm can only solve problems in the complexity class BPP.  While we have no definite results on the complexity, we have shown that the simplest way to place the adiabatic algorithm in BPP, by path integral QMC with local updates, does not work.
We have shown that it is possible to have a path in parameter space of quantum systems with a spectral gap that is only polynomially small and which have no sign problem, but for which QMC has exponentially slow equilibration for the natural choice of annealing protocol.
While the existence of obstructions to equilibrating QMC based on a nontrivial fundamental group are well-known, for example when studying bosons moving on a torus, and much effort has been devoted to nonlocal updates which might alleviate these problems, we have shown much stronger effects using a fundamental group which is a free group on two or more generators.  These stronger effects prevent QMC from accurately calculating the ground state energy, even at large $\beta$.

Perhaps more surprisingly, we have shown that slow equilibration of QMC can happen even if the fundamental group is trivial.  These examples are still based on results in topology, though, as they exploit the connection between a group presentation and a simplicial complex.

Finally, we have provided some analytic results connecting the spectral gap of the quantum Hamiltonian to the relaxation.  Interestingly, this implies that our third example in section ~\ref{notopo} gives a Markov dynamics whose corresponding Hamiltonian (that is, the Hamiltonian defined from the Markov dynamics, rather than $H^{quantum}$) has an exponentially small spectral gap but has only short-range correlations for far separated spins, because far separated spins correspond to very different imaginary times.

{\it Acknowledgments---} I thank M. Freedman for pointing out the ``Desperado puzzle" example and for many useful discussions.  I thank A. Harrow for raising my interest in the problem studied in this paper, and I thank A. Harrow and Z. Wang for useful discussions.
 I thank D. Wecker for useful comments on a draft of this paper.

\onecolumngrid

\vfuzz2pt
\newpage
\appendix
\changepage{}{-20mm}{}{+10mm}{}{}{}{}{}

\section*{Appendix. Recursive group presentations from low dimensional topology}
\begin{center}{\bf by M. H. Freedman} \end{center}
\vspace{12pt}

The purpose of this short appendix is to give some geometric/topological context to the phenomenon (see Section 4 of the main paper) of group presentations which, in geometric language, contain short contractible loops $\gamma$ which bound only exponentially large area disks $\Delta$.  Our examples all have the property that while area $\Delta\sim e^{\operatorname{length}(\gamma)}$, $\Delta$ is ``thin'' in that $\gamma$ can be swept over $\Delta$ with only a linear increase in length.  This thinness property appears to be quite generic: it is the basis for many ``string puzzles'' \cite{desperado} and may have some yet unexploited implications in topology.

\section{Solenoid}
Let us start with the dyadic solenoid and compare with the trivial group presentation of Section 4C:
\begin{equation}\label{eqn:A1}
\{g_1,\ldots,g_n \mid g_1 = g_2^2,\ldots,g_{n-1} = g_n^2,g_n\}
\end{equation}
The solenoid $X$ is the continua $X = \cap_{i=0}^\infty S_i$, where each $S_i$ is a solid torus and $S_{i+1}$ is embedded (with no normal twists) in $S_i$, as shown in Figure \ref{fig:A1}.

\begin{figure}[hbpt]
\[\begin{xy}<5mm,0mm>:
(0,0)*\xycircle<95pt>{}
,(-0.25,0)="A"
,"A"+"A"="B"
,(0.5,1)+"A"*{\hcap[4]}
,(0.5,1.25)+"A"*{\hcap[5]}
,(0.5,2.25)+"A"*{\hcap[9]}
,(0.5,2.5)+"A"*{\hcap[10]}
,(-0.75,0.25)+"A"*{\xbendr[2.5]}
,(-0.875,0.125)+"A"*{\xbendr[2.75]}
,(0.5,1)+"A"*{\hcap[-6.5]}
,(0.5,1.25)+"A"*{\hcap[-7.5]}
,(-0.75,2.25)+"A"*{\xbendl[-2.5]}
,(-1,2.5)+"A"*{\xbendl[-3]}
,(-1.25,1.625)+"A"*{\xbendu[-2]}
,(-1.5,1.75)+"A"*{\xbendu[-2]}
,(-0.75,0)*{\xcaph[3]}
,(-0.5625,0)*{\xcaph[-2.25]}
\ar@{-}@`{(-3.0625,1)+"B"} (-3,1.5)+"B";(-2.925,0.625)+"B"
\end{xy}\]
\caption{}
\label{fig:A1}
\end{figure}

We are actually concerned with the finite stages $X_n = S_n\subset S^1\times B^2 = S_0\subset\mathbb{R}^3\subset S^3$, where $S^3$ denotes the $3$-sphere $=\mathbb{R}^3\cup\infty$.  For example, $X_3$ wraps eight times through the meridian $\gamma$ of $S_1$ as shown in Figure \ref{fig:A2}.

\begin{figure}[hbpt]
\[\begin{xy}<5mm,0mm>:
(0,0)*\xycircle<95pt>{}
,(-0.3125,0)*{\xcaph[1.25]}
,(-0.1875,0)*{\xcaph[-0.75]}
,(-0.5,0)*{\vcap[-2]}
,(-0.75,0)*{\vcap[-3]}
,(-0.5,0)*{\vcap[3.25]}
,(-0.75,0)*{\vcap[4.25]}
,(-1.125,0)*{\vcap[-4.5]}
,(-1.375,0)*{\vcap[-5.5]}
,(-1.125,0)*{\vcap[5.75]}
,(-1.375,0)*{\vcap[6.75]}
,(-1.75,0)*{\vcap[-7]}
,(-2,0)*{\vcap[-8]}
,(-1.75,0)*{\vcap[8.25]}
,(-2,0)*{\vcap[9.25]}
,(-2.375,0)*{\vcap[-9.5]}
,(-2.625,0)*{\vcap[-10.5]}
,(-2.625,0.5)*{\xbendu[-1]}
,(-2.375,0.375)*{\xbendu[-0.75]}
,(0,0.5)*{\xbendd}
,(0.125,0.625)*{\xbendd[1.25]}
\ar@{-}@`{(-4.25,1.5625)} (-4.75,1);(-4,1.675)
\ar@{-} (-1.06,1.6125);(-1.5,1.67)
\ar@{-} (-0.65,1.95);(-1.1,2.075)
\ar@{-}@`{(-2.8125,1.725)} (-2.5,1.725);(-3.125,1.61)
\ar@{-}@`{(-2.65,2.25)} (-2.25,2.25);(-2.825,2.19)
\end{xy}\]
\caption{}
\label{fig:A2}
\end{figure}

While $\pi_1(S^3\backslash(\gamma\cup X_n))$ is an extremely complicated nonabelian group (It is an amalgamated free product of $n-1$ copies of $\pi_1(S_i - S_{i+1})$, which can be computed by the Wertinger algorithm as:
\begin{equation}
\{x_1,x_2,y_1,y_2,y_3 \mid y_1^{-1}x_2^{-1}y_1x_1, y_3^{-1}x_1^{-1}y_3x_2, y_3x^{-1}y_1x_1, y_3^{-1}y_2^{-1}y_3y_1\}),
\end{equation}
the fundamental group of $(S^3\backslash X_n)\cong \mathbb{Z}$, the integers, since $X_n\subset\mathbb{R}^3$ is an unknotted solid torus.  This $\mathbb{Z}$ is precisely the same as the presentation (\ref{eqn:A1}) with the final relation $g_n$ omitted.  Now attach a $2$-handle ($B^2\times I$, $\partial B^2\times I$) to the meridian of $S_n$ (corresponding to the relation $g_n$).  This partially fills the toroidal ``hole'' so the ``hole'' is now just a $3$-ball $B$, and $Y:=(S^3\backslash X_n)\cup 2$-handle is the complement $\cprotect\overline{S^3\backslash B}$, also a $3$-ball.  In particular, $\pi_1(Y) = \{e\}$ with presentation (\ref{eqn:A1}).

We are \emph{not} required to use the metric from $S^3$.  Let us instead take  all units $\overline{(S_i - S_{i+1})}$ isometric with $\partial S_i\cong\partial S_{i+1}\cong S^1_{\text{unit}}\times S^1_{\text{unit}}$ and the final $S_n\cong S^1_{\text{unit}}\times B^2_{\text{unit}}$, products of unit circles and disks in the Euclidean plane.  Thus, any surface $(\Sigma, \partial)\subset(S_n,\partial S_n)$ representing the generator $\delta$ of $H_2(S_n,\partial S_n; Z)$ has area $\geq\pi$.  (Since the composition $\Sigma\hookrightarrow S_n\cong S^1_{\text{unit}}\times D^2_{\text{unit}}\overset{\text{proj}}\rightarrow D^2_{\text{unit}}$ is locally area non-decreasing.)  Similarly, any surface representing $k\delta$ must have area $\geq k\pi$.

Evidently, the linking number $L(\gamma, S_n) = 2^n$.  So for homological reasons, any disk $\Delta$ (or even any oriented surface) bounding $\gamma$ must contain a (possibly disconnected) subsurface representing $2^n\delta$, and hence $\operatorname{area}(\Delta)\geq 2^n\pi$.

On the other hand, the obvious planar disk $\Delta$ bounding $\gamma$ and cutting through $S_n$ in $2^n$ meridional disks $\delta_i$, $1 < i \leq 2^n$, can be deformed to $\Delta^\prime$ by sliding each $\delta_i$ along $S_n$ until it enters the $2$-handle spanning a meridian to $S_n$, to lie in $Y$.  Metrically $\Delta^\prime$ has $2^n$ ``thumbs'' of area $\geq\pi$ each and height $\leq\pi$.

\vspace{.2in}
\begin{figure}[hbpt]
\[\begin{xy}<5mm,0mm>:
(0,1.875)="E"
,(0,-0.125)+"E"*{\xbendl[-0.5]}
,(-0.25,-0.25)+"E"*{\xbendu-}
,(-0.25,-0.75)+"E"*{\xcapv@(0)}
,(-0.75,-1.25)+"E"*{\xbendu}
,(-0.75,-1.625)+"E"*{\xbendl[0.5]}
,(-0.75,-1.875)+"E"*{\xbendr[-0.5]}
,(-0.75,-2)+"E"*{\xbendd}
,(-0.25,-2.5)+"E"*{\xcapv@(0)}
,(-0.25,-3)+"E"*{\xbendd-}
,(0,-3.25)+"E"*{\xbendr}
,(-2.4275,-3.625)+"E"+"E"*{\Delta^\prime}
\end{xy}\hspace{2cm}
\begin{xy}<5mm,0mm>:
,(2.875,1.125)="E8"
,(1,-0.5)+"E8"*{\zbendv}
,(2.3125,-0.625)+"E8"+"E8"*[white]\frm<8pt>{*}
,(1.375,0.125)+"E8"*{\xbendd[0.25]}
,(0.0625,0.125)+"E8"*{\xcaph[2.75]@(0.4)}
,(0.5,-0.875)+"E8"*{\xcapv[0.25]@(0)}
,(1.5,-1)+"E8";(0.75,-1)+"E8"*\dir{}*\ellipse<7pt,4pt>:a(180){}*\ellipse<7pt,4pt>=:a(180){.}
,(1.625,-2)+"E8"+"E8"*[white]\frm<7pt>{*}
,(3,0.625)="E7"
,(0,0)+"E7"*{\sbendv}
,(0.0625,-0.125)+"E7"+"E7"*[white]\frm<8pt>{*}
,(1,-0.5)+"E7"*{\zbendv}
,(1.375,0.125)+"E7"*{\xbendd[0.25]}
,(0.0625,0.125)+"E7"*{\xcaph[2.75]@(0.4)}
,(0.625,0.3125)+"E7"+"E7"*[white]\frm<8pt>{*}
,(0.5,-0.5)+"E7"*{\xcapv@(0)}
,(1,-0.5)+"E7"*{\xcapv@(0)}
,(1.5,-1)+"E7";(0.75,-1)+"E7"*\dir{}*\ellipse<7pt,4pt>:a(180){}*\ellipse<7pt,4pt>=:a(180){.}
,(2.125,0.75)="E6"
,(0,0)+"E6"*{\sbendv}
,(0.0625,-0.125)+"E6"+"E6"*[white]\frm<10pt>{*}
,(1,-0.5)+"E6"*{\zbendv}
,(1.375,0.125)+"E6"*{\xbendd[0.25]}
,(0.0625,0.125)+"E6"*{\xcaph[2.75]@(0.4)}
,(0.75,0.5)+"E6"+"E6"*[white]\frm<11pt>{*}
,(0.5,-0.5)+"E6"*{\xcapv@(0)}
,(1,-0.5)+"E6"*{\xcapv@(0)}
,(1.5,-1)+"E6";(0.75,-1)+"E6"*\dir{}*\ellipse<7pt,4pt>:a(180){}*\ellipse<7pt,4pt>=:a(180){.}
,(1.375,1)="E5"
,(0,0)+"E5"*{\sbendv}
,(0.25,-0.1875)+"E5"+"E5"*[white]\frm<5pt>{*}
,(1,-0.5)+"E5"*{\zbendv}
,(1.375,0.125)+"E5"*{\xbendd[0.25]}
,(0.0625,0.125)+"E5"*{\xcaph[2.75]@(0.4)}
,(0.5,-0.5)+"E5"*{\xcapv@(0)}
,(1,-0.5)+"E5"*{\xcapv@(0)}
,(1.5,-1)+"E5";(0.75,-1)+"E5"*\dir{}*\ellipse<7pt,4pt>:a(180){}*\ellipse<7pt,4pt>=:a(180){.}
,(0.0625,0.875)="E4"
,(1,-0.5)+"E4"*{\zbendv}
,(2.125,-0.6875)+"E4"+"E4"*[white]\frm<5pt>{*}
,(0,0.125)+"E4"*{\xbendu[-0.25]}
,(1.375,0.125)+"E4"*{\xbendd[0.25]}
,(0.0625,0.125)+"E4"*{\xcaph[2.75]@(0.4)}
,(1,-0.6875)+"E4"*{\xcapv[0.625]@(0)}
,(1.5,-1)+"E4";(0.75,-1)+"E4"*\dir{}*\ellipse<7pt,4pt>:a(180){}*\ellipse<7pt,4pt>=:a(180){.}
,(1.1875,-2)+"E4"+"E4"*[white]\frm<6pt>{*}
,(-0.25,0.5)="E1"
,(0,0)+"E1"*{\sbendv}
,(1,-0.5)+"E1"*{\zbendv}
,(0,0.125)+"E1"*{\xbendu[-0.25]}
,(1.375,0.125)+"E1"*{\xbendd[0.25]}
,(0.0625,0.125)+"E1"*{\xcaph[2.75]@(0.4)}
,(0.5,-0.5)+"E1"*{\xcapv@(0)}
,(1,-0.5)+"E1"*{\xcapv@(0)}
,(1.5,-1)+"E1";(0.75,-1)+"E1"*\dir{}*\ellipse<7pt,4pt>:a(180){}*\ellipse<7pt,4pt>=:a(180){.}
,(-2.5,0.75)="E2"
,(0,0)+"E2"*{\sbendv}
,(1,-0.5)+"E2"*{\zbendv}
,(0,0.125)+"E2"*{\xbendu[-0.25]}
,(1.375,0.125)+"E2"*{\xbendd[0.25]}
,(0.0625,0.125)+"E2"*{\xcaph[2.75]@(0.4)}
,(0.5,-0.5)+"E2"*{\xcapv@(0)}
,(1,-0.5)+"E2"*{\xcapv@(0)}
,(1.5,-1)+"E2";(0.75,-1)+"E2"*\dir{}*\ellipse<7pt,4pt>:a(180){}*\ellipse<7pt,4pt>=:a(180){.}
,(-1.5,1.0625)="E3"
,(1,-0.5)+"E3"*{\zbendv}
,(1.375,0.125)+"E3"*{\xbendd[0.25]}
,(0.0625,0.125)+"E3"*{\xcaph[2.75]@(0.4)}
,(0.5,-0.5)+"E3"*{\xcapv@(0)}
,(1,-0.5)+"E3"*{\xcapv@(0)}
,(1.5,-1)+"E3";(0.75,-1)+"E3"*\dir{}*\ellipse<7pt,4pt>:a(180){}*\ellipse<7pt,4pt>=:a(180){.}
,(0,0);(1,0)*\dir{}*\ellipse<120pt,40pt>:a(270),^,:a(60){-}
,(2,0);(1,0)*\dir{}*\ellipse<120pt,40pt>:a(120),^,:a(270){-}
\end{xy}\hspace{2cm}
\begin{xy}<5mm,0mm>:
(-2.4,-0.8)*\ellipse<8pt,16pt>{~}
,(-1.65,-0.35)*\ellipse<8pt,16pt>_,=:a(180){-}*\ellipse<8pt,16pt>:a(90),^,:a(270){.}
,(-1,-0.125)*\ellipse<8pt,16pt>_,=:a(180){-}*\ellipse<8pt,16pt>:a(90),^,:a(270){.}
,(0,0)*\ellipse<8pt,16pt>_,=:a(180){~}
,(-3.3,-0.7)*{\pi}
,(-2.25,-2.75)*{\mbox{\tiny $2$-handle}}
,(0.5,-2.75)*{\mbox{\tiny $\partial S_1$}}
\ar@{-}@/^/ (-5.25,-0.47);(0,1.2)
\ar@{-}@/^/ (-4.3,-2.7);(0,-1.1)
\ar (-2.25,-2.5);(-2.25,-1.75)
\ar (0,-2.25);(-0.5,-1.5)
\end{xy}\]
\caption{}
\label{fig:A3}
\end{figure}

The thinness property originally deduced from the presentation (\ref{eqn:A1}) can be understood geometrically: although $\Delta^\prime$ has exponentially many ``thumbs'' of size $O(1)$, we may avoid stretching $\gamma$ (more than linearly) by passing it over the ``thumbs'' one at a time.

\section{Half gropes and Devil's staircase}

Here we describe a $2$-complex, sometimes called a \emph{half grope}, which is at the heart of the ``Desperado'' or ``Devil's ladder'' string puzzles.  We consider only genus one examples: Figure \ref{fig:B1} shows a half grope $H$ of height $n=4$ with a possible \emph{cap disk} indicated with dotted lines.

\vspace{.1in}
\begin{figure}[hbpt]
\[\begin{xy}<5mm,0mm>:
(0,0)="A1"
,(0,0)+"A1"*{\vcap[4]}
,(0,0)+"A1"*{\xbendd[-1]}
,(1.5,0)+"A1"*{\xbendu}
,(0,-0.5)+"A1"*{\xbendd}
,(1.5,-0.5)+"A1"*{\xbendu-}
,(0.5,-1)+"A1"*{\xcapv@(0)}
,(1.5,-1)+"A1"*{\xcapv@(0)}
,(0.5,-1.5)+"A1"*{\xcapv@(0)}
,(1.5,-1.5)+"A1"*{\xcapv@(0)}
,(0.5,-2)+"A1"*{\xcapv@(0)}
,(1.5,-2)+"A1"*{\xcapv@(0)}
,(2,-2.5)+"A1";(1,-2.5)+"A1"*\dir{}*\ellipse<14pt,7pt>{}
,(0,0.5)+"A1";(1,0.5)+"A1"*\dir{}*\ellipse<7pt,14pt>_,=:a(180){}*\ellipse<7pt,14pt>{.}
,(-1.5,-5)+"A1"+"A1"*{\gamma}
,(1.125,-1)+"A1"+"A1"*{1}
,(2.875,1)+"A1"+"A1"*{2}
,(4,0)+"A1"+"A1"*[white]\frm<3pt>{*}
,(3,-2)+"A1"+"A1"*[white]\frm<3pt>{*}
,(3.5,0.5)="B1"
,(0,0)+"B1"*{\hcap[4]}
,(-0.5,0)+"B1"*{\xbendl-}
,(-0.5,-1.5)+"B1"*{\xbendr}
,(-1,0)+"B1"*{\xbendl}
,(-1,-1.5)+"B1"*{\xbendr-}
,(-1.5,-0.5)+"B1"*{\xcaph@(0)}
,(-1.5,-1.5)+"B1"*{\xcaph@(0)}
,(-2,-0.5)+"B1"*{\xcaph@(0)}
,(-2,-1.5)+"B1"*{\xcaph@(0)}
,(-2.5,-0.5)+"B1"*{\xcaph@(0)}
,(-2.5,-1.5)+"B1"*{\xcaph@(0)}
,(-1.5,-1)+"B1";(-2.5,-1)+"B1"*\dir{}*\ellipse<7pt,14pt>{}
,(-0.5,-1)+"B1";(0.5,-1)+"B1"*\dir{}*\ellipse<14pt,7pt>:a(0),^,:a(180){}*\ellipse<14pt,7pt>{.}
,(-1,-3)+"B1"+"B1"*{3}
,(1,-1)+"B1"+"B1"*{4}
,(-2,-1)+"B1"+"B1"*[white]\frm<3pt>{*}
,(0,0)+"B1"+"B1"*[white]\frm<3pt>{*}
,(2,2)="A2"
,(0,0)+"A2"*{\vcap[4]}
,(0,0)+"A2"*{\xbendd[-1]}
,(1.5,0)+"A2"*{\xbendu}
,(0,-0.5)+"A2"*{\xbendd}
,(1.5,-0.5)+"A2"*{\xbendu-}
,(0.5,-1)+"A2"*{\xcapv@(0)}
,(1.5,-1)+"A2"*{\xcapv@(0)}
,(0.5,-1.5)+"A2"*{\xcapv@(0)}
,(1.5,-1.5)+"A2"*{\xcapv@(0)}
,(0.5,-2)+"A2"*{\xcapv@(0)}
,(1.5,-2)+"A2"*{\xcapv@(0)}
,(2,-2.5)+"A2";(1,-2.5)+"A2"*\dir{}*\ellipse<14pt,7pt>{}
,(0,0.5)+"A2";(1,0.5)+"A2"*\dir{}*\ellipse<7pt,14pt>_,=:a(180){}*\ellipse<7pt,14pt>{.}
,(1.125,-1)+"A2"+"A2"*{5}
,(2.875,1)+"A2"+"A2"*{6}
,(4,0)+"A2"+"A2"*[white]\frm<3pt>{*}
,(3,-2)+"A2"+"A2"*[white]\frm<3pt>{*}
,(5.5,2.5)="B2"
,(0,0)+"B2"*{\hcap[4]}
,(-0.5,0)+"B2"*{\xbendl-}
,(-0.5,-1.5)+"B2"*{\xbendr}
,(-1,0)+"B2"*{\xbendl}
,(-1,-1.5)+"B2"*{\xbendr-}
,(-1.5,-0.5)+"B2"*{\xcaph@(0)}
,(-1.5,-1.5)+"B2"*{\xcaph@(0)}
,(-2,-0.5)+"B2"*{\xcaph@(0)}
,(-2,-1.5)+"B2"*{\xcaph@(0)}
,(-2.5,-0.5)+"B2"*{\xcaph@(0)}
,(-2.5,-1.5)+"B2"*{\xcaph@(0)}
,(-1.5,-1)+"B2";(-2.5,-1)+"B2"*\dir{}*\ellipse<7pt,14pt>{}
,(-0.5,-1)+"B2";(0.5,-1)+"B2"*\dir{}*\ellipse<14pt,7pt>:a(0),^,:a(180){}*\ellipse<14pt,7pt>{.}
,(-1,-3)+"B2"+"B2"*{7}
,(1,-1)+"B2"+"B2"*{8}
,(-2,-1)+"B2"+"B2"*[white]\frm<3pt>{*}
,(0,0)+"B2"+"B2"*[white]\frm<3pt>{*}
,(4,4)="A3"
,(1,-4.75)+"A3"+"A3";(1,0.75)+"A3"+"A3",**\dir{.}
,(3,-4.75)+"A3"+"A3";(3,0.75)+"A3"+"A3",**\dir{.}
,(2,-2.5)+"A3";(1,-2.5)+"A3"*\dir{}*\ellipse<14pt,7pt>{}
,(2,0.375)+"A3";(1,0.375)+"A3"*\dir{}*\ellipse<14pt>:a(0),^,:a(180){.}
\ar (-1,-5)+"A1"+"A1";(0,-5)+"A1"+"A1"
\end{xy}
\begin{xy}<5mm,0mm>:
(0,4.75)="E"
,(0,0)+"E"*{\xbendr[-2]}
,(0.5,-0.5)+"E"*{\xbendd[2]}
,(1.5,-1.5)+"E"*{\xcapv[2]@(0)}
,(1.5,-2.5)+"E"*{\xbendd[-2]}
,(2,-3.25)+"E"*{\xbendr}
,(2,-3.75)+"E"*{\xbendl-}
,(1.5,-4)+"E"*{\xbendu[-2]}
,(1.5,-5)+"E"*{\xcapv[2]@(0)}
,(0.5,-6)+"E"*{\xbendu[2]}
,(0,-6.5)+"E"*{\xbendl[2]}
,(6,-7.5)+"E"+"E"*{H_4^+}
\end{xy}\]
\caption{}
\label{fig:B1}
\end{figure}

The building block is a punctured torus.  If we take the puncture to be the size and shape of a longitude circle we may glue $n$ copies together as shown to produce $H_n$.  $H_n^+$ is the half grope union a final disk bounding the topmost longitude.  Using $[a,b]$ to represent $bab^{-1}a^{-1}$ and simply integers to denote generators, we may present $\pi_1(H_4)$ and $\pi_1(H_4^+)$ as follows.
\begin{equation}\label{eqn:B1}
\pi_1(G_4) = \{1,2,3,4,5,6,7,8 \mid 1=[3,4], 3=[5,6], 5=[7,8]\} \text{, and}
\end{equation}
\begin{equation}\label{eqn:B2}
\pi(G_4^+) = \{1,2,3,4,5,6,7,8 \mid 1=[3,4], 3=[5,6], 5=[7,8], 7=e\}
\end{equation}
and similarly for all $H_n$ and $H_n^+$.

The loop $\gamma = [1,2]$ and the relations tell us immediately that $\gamma$ lies in the $n$-stage of the lower central series of $\pi_1(H_n)\cong\Free(1, 2, 3, 4, \ldots, 2n)$, where we consider ordinary commutators to be in stage $1$ of the l.~c.~s.  On the other hand, $\gamma = e\in\pi_1(H_n^+)\cong\Free(2, 4, 6,\ldots, 2n)$.  To see that $\pi_1(H_n)$ and $\pi_1(H_n^+)$ are free, observe that they collapse to one-dimensional graphs, e.g., a punctured torus $H_1$ collapses to a wedge of two circles.

\begin{proposition}\label{prop:B1}
Any map $f$ of a disk bounding $\gamma$ into $H_n^+$ must pass over the cap at least $2^n$ times, i.e., if $p$ is the origin of the cap, $f^{-1}(p)$ must consist of at least $2^n$ points, which we may assume to be transverse.
\end{proposition}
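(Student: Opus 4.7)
My plan is to convert the geometric statement into an algebraic identity in $\pi_1(H_n)$, and then extract the bound $2^n$ from the $\mathbb{Z}[\bar F]$-module structure on the abelianization of the normal closure of $a_n$.

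After perturbing $f$ to be transverse to $p$, the preimage $f^{-1}(p)$ is a finite set of, say, $k$ points. Choosing a small open disk neighborhood $U$ of $p$ inside the cap, $f^{-1}(U)$ is a disjoint union of disks $D_1,\dots,D_k\subset D^2$, each mapping to $U$ with local degree $\pm 1$. The complement $D':= D^2\setminus \bigsqcup_i D_i$ is a planar surface whose outer boundary goes to $\gamma$ and whose $k$ inner boundaries each wrap $\partial U\subset H_n^+$ with multiplicity $\pm 1$. Since $H_n^+\setminus U$ deformation retracts to $H_n$, $f|_{D'}$ factors through $H_n$, and the inner boundaries each represent $a_n^{\pm 1}$ there. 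Cutting $D'$ along disjoint arcs from a basepoint on $\partial D^2$ to each inner boundary yields a disk whose boundary word gives the $\pi_1(H_n)$-identity
\be
\gamma \;=\; \prod_{i=1}^{k} w_i\, a_n^{\epsilon_i}\, w_i^{-1}, \qquad \epsilon_i \in \{\pm 1\}.
\ee

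The desired lower bound on $k$ is then a Magnus-style calculation. Iterating the relations $a_{j-1}=[a_j,b_j]$ shows that $F:=\pi_1(H_n)$ is free on $a_n,b_n,b_{n-1},\dots,b_1$. Let $N\triangleleft F$ be the normal closure of $a_n$, so that $\bar F:=F/N=\Free(b_1,\dots,b_n)=\pi_1(H_n^+)$. The abelianization $N^{\mathrm{ab}}$ is the free $\mathbb{Z}[\bar F]$-module of rank one generated by $[a_n]$, with $\bar F$ acting by conjugation. Projecting the displayed identity via $\tau:N\to N^{\mathrm{ab}}$ gives
\be
\tau(\gamma) \;=\; \Bigl(\sum_{i=1}^k \epsilon_i\, \bar w_i\Bigr)\cdot [a_n]\;\in\; \mathbb{Z}[\bar F]\cdot[a_n],
\ee
whose $\ell^1$ norm is at most $k$. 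On the other hand, using the paper's convention $[x,y]=yxy^{-1}x^{-1}$, one checks inductively that $\tau([a_j,b_j])=(b_j-1)[a_j]$; applying this to each relation $a_{j-1}=[a_j,b_j]$ and finally to $\gamma=[a_1,b_1]$ yields
\be
\tau(\gamma) \;=\; (b_1-1)(b_2-1)\cdots(b_n-1)\cdot [a_n].
\ee
Expanded, this is a signed sum over subsets $S\subseteq\{1,\dots,n\}$ of the increasingly-ordered monomials $\prod_{j\in S} b_j$. Because $\bar F$ is free on $b_1,\dots,b_n$, these $2^n$ monomials are pairwise distinct reduced words, so each appears with coefficient $\pm 1$ and $\|\tau(\gamma)\|_1=2^n$. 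Combined with the upper bound, $k\ge 2^n$, as required.

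The only delicate point is identifying $N^{\mathrm{ab}}$ with the free rank-one $\mathbb{Z}[\bar F]$-module on $[a_n]$ (this is standard for the normal closure of a single element in a free group, and is equivalent to Fox-calculus combined with Magnus's theorem); once that is in hand, the non-cancellation of the $2^n$ monomials is immediate from the freeness of $\bar F$. Everything else—transversality, the surgery on $D^2$, and the $\ell^1$ upper bound—is routine.
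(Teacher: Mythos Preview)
Your argument is correct and takes a genuinely different route from the paper's proof. The paper proceeds by a geometric induction on the height: it first shows (Lemmas B.2 and B.3) via a relative homology calculation $H_2(H_1^+,\gamma;\mathbb{Z})\cong\mathbb{Z}$ that any degree-one planar domain mapped into a single capped stage must meet the cap-point in at least two points of opposite sign, and then feeds the two resulting planar subdomains into the next stage, doubling the count at each level. Your proof instead collapses the whole inductive tower into a single algebraic identity: you read off from the punctured disk the expression of $\gamma$ as a product of $k$ conjugates of $a_n^{\pm 1}$ in $\pi_1(H_n)$, push this into $N^{\mathrm{ab}}\cong\mathbb{Z}[\bar F]\cdot[a_n]$, and compute $\tau(\gamma)=(b_1-1)\cdots(b_n-1)\cdot[a_n]$ directly. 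Because $a_n$ is part of a free basis of $F$, the identification of $N^{\mathrm{ab}}$ with the free rank-one $\mathbb{Z}[\bar F]$-module is indeed standard (the cover corresponding to $N$ is the Cayley tree of $\bar F$ with one loop at each vertex), and the $2^n$ ordered monomials in the $b_j$ are distinct in the free group $\bar F$, so no cancellation occurs.

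What each approach buys: the paper's homological induction is self-contained and makes visible the nested planar-subdomain structure that underlies the ``thinness'' discussion later in the appendix; it also generalizes painlessly to the full gropes $G_n^+$ of Section~C, where the branching is over both meridian and longitude. Your Fox-calculus/Magnus-module argument is cleaner for $H_n^+$ specifically, gives the exact coefficient $(b_1-1)\cdots(b_n-1)$ rather than just a count, and connects the statement to standard lower-bound machinery for Dehn/area functions in combinatorial group theory. It would also adapt to $G_n^+$, but there $N$ is the normal closure of $2^n$ elements and the bookkeeping is closer in spirit to the paper's induction.
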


We need:

\begin{lemma}\label{lem:B2}
Setting $\gamma = \partial H_1^+$ and $f:(D^2,\partial)\rightarrow (H^+, \gamma)$, $f$ $1$-$1$ on $\partial$, then $|f^{-1}(p)|\geq 2$.
\end{lemma}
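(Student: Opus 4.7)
The plan is to exploit the elementary observation that $H_1^+\setminus\{p\}$ deformation retracts onto $H_1$, since the once-punctured cap $D_c\setminus\{p\}$ retracts onto its attaching circle $a\subset H_1$. Consequently $\pi_1(H_1^+\setminus\{p\})\cong\pi_1(H_1)=F(a,b)$, the free group on the two generators of the punctured torus, and the boundary loop $\gamma=\partial H_1$ represents the commutator $[a,b]=aba^{-1}b^{-1}$ in this group. Since $f$ is one-to-one on $\partial D^2$, the restriction $f|_{\partial D^2}$ is a homeomorphism onto $\gamma$. My strategy is to first homotope $f$ rel $\partial D^2$ so that it is smooth (or PL) and transverse to $p$, whence $f^{-1}(p)$ is a finite set of interior points each of local degree $\pm 1$, and then to rule out both $|f^{-1}(p)|=0$ and $|f^{-1}(p)|=1$.

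If $f^{-1}(p)=\emptyset$, then $f$ factors through $H_1^+\setminus\{p\}\simeq H_1$, so $\gamma$ would be null-homotopic in $H_1$. But $[a,b]$ is a nontrivial, cyclically reduced word of length four in $F(a,b)$, a contradiction. If instead $f^{-1}(p)=\{q\}$, delete a small open disk $B\subset D^2$ containing $q$ whose image lies in the interior of $D_c$ and on which $f$ is a local diffeomorphism. The complementary annulus $A=D^2\setminus B$ maps into $H_1^+\setminus\{p\}$, and composing with the retract onto $H_1$ produces a map $A\to H_1$ whose outer boundary traces $\gamma=[a,b]$ and whose inner boundary $\partial B$, being a loop of winding number $\pm 1$ around $p$ in $D_c$, retracts to $a^{\pm 1}\subset H_1$. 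An annulus in $H_1$ provides a free homotopy between its two boundary loops, so $[a,b]$ would be conjugate to $a^{\pm 1}$ in $F(a,b)$. But conjugacy classes in a free group are determined by cyclic reduction, and $[a,b]$ and $a^{\pm 1}$ have cyclically reduced lengths $4$ and $1$ respectively; contradiction. Hence $|f^{-1}(p)|\ge 2$.

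The only delicate point is confirming that the local degree being $\pm 1$ translates, after retraction of the punctured cap, into exactly $a^{\pm 1}$ rather than some higher power of $a$; this follows immediately from the fact that the retract $D_c\setminus\{p\}\to\partial D_c$ sends a loop of winding number $n$ around $p$ to $a^n$. Otherwise the argument is simply a null-homotopy obstruction combined with the classification of conjugacy classes in a free group, and one expects the induction step for Proposition \ref{prop:B1} to proceed by applying this lemma inductively at each stage of the grope (with $a^{\pm 1}$ itself becoming a commutator in the next stage), doubling the count at each level to yield the claimed $2^n$.
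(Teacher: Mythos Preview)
Your proof is correct but follows a genuinely different route from the paper's. The paper argues homologically: a Mayer--Vietoris computation gives $H_2(H_1^+,\gamma;\mathbb{Z})\cong\mathbb{Z}$ with $\partial$ an isomorphism onto $H_1(\gamma;\mathbb{Z})$, so any two relative $2$-cycles bounding $\gamma$ are homologous up to sign. Since the punctured torus $H_1$ is itself such a $2$-cycle and is disjoint from $p$, the \emph{algebraic} intersection number $\sharp(f(D^2),p)$ vanishes; combined with the observation that $\gamma$ is essential in $H_1$ (so the preimage cannot be empty), one gets at least two preimages of opposite sign.

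You instead stay entirely inside $\pi_1$: the retraction $H_1^+\setminus\{p\}\simeq H_1$ reduces the two forbidden cases $|f^{-1}(p)|=0,1$ to the impossible identities $[a,b]=1$ and $[a,b]\sim_{\mathrm{conj}} a^{\pm 1}$ in $F(a,b)$, dispatched by cyclic reduction. This is more elementary---no Mayer--Vietoris, no relative homology---and arguably cleaner for the bare statement of the lemma. What the paper's route buys is that the zero signed count comes for free, and this sign information is precisely what drives the induction in Proposition~\ref{prop:B1}: the two preimage points have opposite orientation, yielding planar subdomains $P_+,P_-$ that each map with degree $\pm 1$ onto the next-stage boundary, which is the hypothesis of Lemma~\ref{lem:B3}. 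Your conjugacy argument does not directly produce that sign pairing, so to carry out the induction you would need to supplement it (for instance by abelianising your multi-punctured-disk picture to recover that the inner boundary classes sum to zero in $H_1(H_1)$), and you would also need the planar-domain strengthening in Lemma~\ref{lem:B3}, whose proof in the paper uses a separate degree-and-planarity argument rather than a repeat of the Lemma~\ref{lem:B2} method.
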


\begin{proof}
It is readily computed (by a Mayer-Vietoris sequence) that $H_2(H_1^+, \gamma; Z)\cong Z$ and that $\partial : H_2(H_1^+,\gamma;Z)\rightarrow H_1(\gamma;Z)$ is an isomorphism.  Consequently, any two null homologies of $\gamma$ are themselves homologous (up to sign): $[f(D^2)] = \pm [H_1]\in H_2(H_1^+, \gamma; Z)$.  Since $H_1$ is disjoint from $p$, the homological intersection number $\sharp(f(D^2), p) = |H_1\cap p| = 0$.  But $\gamma$ is homotopically essential in $H_1$, so $|{f^\prime}^{-1}(p)| > 0$, for any $f^\prime$ homotopic to $f$, for if $f^\prime$ misses $p$ it may be deformed into $H_1$.  Since the signed sum of inverse images for $f^\prime$ generic is $0$, $|f^{-1}(p)|\geq 2$.
\end{proof}

For our induction we actually require a slightly stronger:

\begin{lemma}\label{lem:B3}
Let $f:(P,\partial P) \rightarrow (H_1^+,\gamma)$ be a map of a compact planar domain inducing degree $=\pm 1$ on $\partial$, $f_\ast [\partial P] = \pm 1\in H_1(\gamma; Z)\cong Z$, then $|f^{-1}(p)|\geq 2$.
\end{lemma}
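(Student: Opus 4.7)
The plan is to preserve the signed-intersection argument of Lemma B.2 and to supply the one missing ingredient, namely a proof that $f^{-1}(p)\neq\emptyset$ when $P$ is a planar surface with several boundary circles rather than a disk. The first step is verbatim: the same long exact sequence calculation shows $\partial : H_2(H_1^+, \gamma;\mathbb{Z}) \to H_1(\gamma;\mathbb{Z})$ is an isomorphism $\mathbb{Z}\cong\mathbb{Z}$ with generator $[H_1]$ on the left, so the hypothesis $f_\ast [\partial P] = \pm 1$ forces $[f(P)] = \pm [H_1]$ in $H_2(H_1^+,\gamma)$. Because the cap point $p$ is disjoint from $H_1$, this gives $\sharp(f(P),p) = \sharp(H_1,p) = 0$; after a transverse perturbation of $f$ at $p$, the signed count of preimages vanishes. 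Hence it suffices to show $f^{-1}(p)$ is non-empty, which then forces $|f^{-1}(p)|\ge 2$.

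To rule out $f^{-1}(p)=\emptyset$, suppose for contradiction that $f$ misses $p$. Using the deformation retraction $H_1^+\setminus\{p\}\simeq H_1$ (the punctured cap retracts to its boundary longitude, which sits inside $H_1$), push $f$ to $\tilde f : (P,\partial P) \to (H_1,\gamma)$ with $\tilde f_\ast [\partial P] = \pm 1\in H_1(\gamma)$. Now collapse $\partial P$ on the source and $\gamma$ on the target; since $\tilde f(\partial P)\subset\gamma$ this descends to $\bar f : P/\partial P \to H_1/\gamma$. The target is homotopy equivalent to $T^2$ (adjoining a cone on $\gamma$ caps off the puncture), and the source, because $P$ is a planar surface with $k+1$ boundary components, is homotopy equivalent to $S^2 \vee \bigvee_{i=1}^{k} S^1$: collapsing each boundary circle of $P$ separately turns $P$ into $S^2$, and then identifying the $k+1$ resulting distinguished points to one introduces $k$ extra loops. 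Crucially, $H_2(P/\partial P)=\mathbb{Z}$ is concentrated on the $S^2$ wedge summand.

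The contradiction is then immediate: $\bar f|_{S^2} : S^2\to T^2$ is null-homotopic because $\pi_2(T^2)=0$, so $\bar f_\ast = 0$ on $H_2$. Via the standard identifications $H_2(P,\partial P)\cong H_2(P/\partial P)$ and $H_2(H_1,\gamma)\cong H_2(H_1/\gamma)$, this forces $\tilde f_\ast [P,\partial P] = 0\in H_2(H_1,\gamma)$. But the long exact sequence for $(H_1,\gamma)$, using that $\gamma=[a,b]$ is a commutator and hence zero in $H_1(H_1)$, again produces an isomorphism $\partial : H_2(H_1,\gamma) \cong H_1(\gamma)$, so $\tilde f_\ast [\partial P] = 0$, contradicting the hypothesis $\pm 1$. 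The main obstacle I expect is the bookkeeping identification $P/\partial P \simeq S^2 \vee \bigvee S^1$ and checking that the fundamental class of $(P,\partial P)$ maps to the $S^2$ wedge generator; once that is in hand, $\pi_2(T^2)=0$ does all the work.
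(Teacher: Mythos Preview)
Your argument is correct, but it takes a different route from the paper's. Both proofs share the first half verbatim (the signed count $\sharp(f(P),p)=0$), and both then argue by contradiction that $f$ cannot factor through $H_1$. The paper's argument at that point is a two-line intersection-number computation: take the dual meridian and longitude curves $\alpha,\beta$ on $H_1$ meeting transversally in one point $x$, pull them back to $a=f^{-1}(\alpha)$, $b=f^{-1}(\beta)\subset P$; degree~$\pm 1$ forces the algebraic intersection number of $a$ and $b$ in $P$ to be $\pm 1$, but in a planar (genus-zero) surface any two closed $1$-cycles have intersection number $0$, contradiction. Your argument instead collapses boundaries, identifies $P/\partial P\simeq S^2\vee\bigvee_k S^1$ and $H_1/\gamma\simeq T^2$, and kills $H_2$ using $\pi_2(T^2)=0$. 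Both arguments encode the same obstruction---a genus-zero surface cannot map with degree one to the torus---but the paper's version is more elementary and self-contained (no quotient spaces, no homotopy-equivalence bookkeeping), while yours is more conceptual and would generalize more readily if one wanted to replace $H_1$ by a higher-genus surface stage. The identification $P/\partial P\simeq S^2\vee\bigvee_k S^1$ you flagged as the main obstacle is indeed correct (collapse each boundary circle separately to get $S^2$ with $k+1$ marked points, then use the standard fact that identifying $n$ points in a path-connected CW complex $X$ gives $X\vee\bigvee_{n-1}S^1$), and the fundamental class does land on the $S^2$ summand since that is the only source of $H_2$.
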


\begin{proof}
The only new point is to show that the image $f(p)$ cannot lie in the punctured torus, $H_1$.  If $f$ did factor through $H_1$ then $f:(P,\partial B)\rightarrow (H_1, \partial)$ is a degree one map.  Let $\alpha$, $\beta$ be the dual meridian and longitude loops on $H_1$, respectively, and let $a$ and $b$ be their transverse inverse images $a = f^{-1}(\alpha)$, $b = f^{-1}(\beta)$.  Applying the $\operatorname{degree}(f) = 1$ property to the single transverse intersection $x = \alpha\cap\beta$ we see that $|f^{-1}(x)| =$ intersection number $(a, b) = 1$, contradicting the planarity of $P$.
\end{proof}

Apply \ref{lem:B3} to $H_k$ with all higher stages pinched to a disk to form $H_k^+$, starting with $k = 1,2,\ldots$.  Corresponding to the (at least) two points of opposite sign comprising $f^{-1}(p)$, $p\in$ cap $D^2$ of $H_1$ will contain (at least) two disjoint planar domains $P_+$ and $P_-$ mapping with opposite orientation over the cap of $H_1^+$.  These two planar domains can now be regarded as mapping into $H_2^+\backslash H_1)$, degree one on the boundary of the second stage.  \ref{lem:B3} now identifies further planar subdomains $P_{++}$, $P_{+-} \subset P_+$ and $P_{+-}$, $P_{--}\subset P_-$ mapping with opposite signs over the cap of $H_2^+$.  By induction we obtain $2^n$ disjoint planar domains $P_{n\text{-string}} \subset D$, each mapping over the final cap of $H_n^+$.  The orientation of each mapping is the weight of the string.  This proves \ref{prop:B1}. \qed

\ref{prop:B1} implies any disk in $H_n^+$ bounding $\gamma$ has exponential area $::2^n$.  Because the $\pi_1(H_n^+)$ is nonabelian, an algebraic---weight base argument---for this area estimate is not easy.

However, the proof of thinness for a suitably proven null homotopy of $\gamma$ is easily given in the algebraic context.  We simply illustrate the initial steps for shrinking $\gamma$ in $H_4^+$ with only a linear increase of its length:

\noindent $\gamma\rightarrow 1 2 \bar{1}\bar{2}\rightarrow 3 4 \bar{3}\bar{4}2\bar{1}\bar{2}\rightarrow 5 6 \bar{5}\bar{6} 4 \bar{3}\bar{4} 2 \bar{1}\bar{2}\rightarrow 7 8 \bar{7}\bar{8} 6 \bar{5}\bar{6} 4 \bar{3}\bar{4} 2 \bar{1}\bar{2}\rightarrow 8 \bar{7}\bar{8} 6 \bar{5}\bar{6} 4 \bar{3}\bar{4} 2 \bar{1}\bar{2}\rightarrow 6 \bar{5}\bar{6} 4 \bar{3}\bar{4} 2 \bar{1}\bar{2} \rightarrow 6\bar{7}\bar{8} 7 8 \bar{6} 4 \bar{3}\bar{4} 2 \bar{1}\bar{2} \rightarrow 6 \bar{8} 7 8 \bar{6} 4 \bar{3}\bar{4} 2 \bar{1}\bar{2}\rightarrow 4 \bar{3}\bar{4} 2 \bar{1}\bar{2}\rightarrow 4 \bar{5}\bar{6} 5 6 \bar{4} 2 \bar{1} \bar{2}\rightarrow 4 \bar{7}\bar{8} 7 8 \bar{6} 5 6 \bar{4} 2 \bar{1} \bar{2} \rightarrow 4 \bar{8} 7 8 \bar{6} 5 6 \bar{4} 2 \bar{1}\bar{2} \rightarrow 4 \bar{6} 5 6 \bar{4} 2 \bar{1}\bar{2} \rightarrow 4 \bar{6} 7 8 \bar{7}\bar{8} 6 \bar{4} 2 \bar{1}\bar{2}\rightarrow 4 \bar{6} 8 \bar{7}\bar{8} 6 \bar{4} 2 \bar{1}\bar{2} \rightarrow 2 \bar{1} 2 \rightarrow 2 \bar{3}\bar{4} 3 4 2 \rightarrow 2 5 6 \bar{5}\bar{6}\bar{4} 3 4 2 \rightarrow$ etc.,
where we always use a relation to increase the leftmost odd letter until it reaches $7$ ($ = 2n - 1$) and may be canceled.

The procedure above is identical to the YouTube video \cite{desperado} showing how to solve Puzzle Master ``Desperado.''

The essential features of the problem are still present in a simplified picture where the ambient fundamental group is only Z: Consider a slab in $\mathbb{R}^3$ with an unknotted but geometrically interesting arc $\alpha$ joining top to bottom.  We draw $\alpha$ below (Figure \ref{fig:B2}) so that the loop $\gamma$, also illustrated, bounds an embedded $H_4^+$ in the complement of $\alpha$.  (Find it!)  By a mild extension of the arguments used to prove Proposition \ref{prop:B1}, it may also be proved that the area of the smallest disk $\Delta\subset\text{slab}\backslash\alpha$ with $\partial\Delta = \gamma$ also grows exponentially with the number $n$ of self-feeding stages ($n = 4$ in Figure \ref{fig:B2}).  Of course the presence of $H_n^+$ confirms that $\gamma$ can be homotoped to a point so that its length increases only linearly with $n$.

\begin{figure}[hbpt]
\[\begin{xy}<5mm,0mm>:
(0,0)*{\hcap}
,(-0.5,-0.5)*{\xcaph@(0)}
,(-0.75,-0.5)*{\xbendl[-0.5]}
,(-0.875,-0.625)*{\xbendu[-0.5]}
,(-1,0)*{\xbendl-}
,(-1.25,-0.25)*{\xbendu-}
,(-0.875,-0.875)*{\xcapv@(0)}
,(-1.25,-0.75)*{\xcapv[1.25]@(0)}
,(-0.875,-1.5)*{\vcap[-0.875]}
,(-0.4375,-0.5625)*{\xcapv[1.875]@(0)}
,(-0.8125,-0.9375)="E1"
,(0,0)+"E1"*{\hcap}
,(-0.5,-0.5)+"E1"*{\xcaph@(0)}
,(-0.75,-0.5)+"E1"*{\xbendl[-0.5]}
,(-0.875,-0.625)+"E1"*{\xbendu[-0.5]}
,(-1,0)+"E1"*{\xbendl-}
,(-1.25,-0.25)+"E1"*{\xbendu-}
,(-0.875,-0.875)+"E1"*{\xcapv@(0)}
,(-1.25,-0.75)+"E1"*{\xcapv[1.25]@(0)}
,(-0.875,-1.5)+"E1"*{\vcap[-0.875]}
,(-0.4375,-0.5625)+"E1"*{\xcapv[1.875]@(0)}
,"E1"+"E1"="E2"
,(0,0)+"E2"*{\hcap}
,(-0.5,-0.5)+"E2"*{\xcaph@(0)}
,(-0.75,-0.5)+"E2"*{\xbendl[-0.5]}
,(-0.875,-0.625)+"E2"*{\xbendu[-0.5]}
,(-1,0)+"E2"*{\xbendl-}
,(-1.25,-0.25)+"E2"*{\xbendu-}
,(-0.875,-0.875)+"E2"*{\xcapv@(0)}
,(-1.25,-0.75)+"E2"*{\xcapv[1.25]@(0)}
,(-0.875,-1.5)+"E2"*{\vcap[-0.875]}
,(-0.4375,-0.5625)+"E2"*{\xcapv[1.875]@(0)}
,"E1"+"E2"="E3"
,(0,0)+"E3"*{\hcap}
,(-0.5,-0.5)+"E3"*{\xcaph@(0)}
,(-1,0)+"E3"*{\hcap-}
,(-1,0)+"E3"*{\xcaph@(0)}
,(-1,-0.5)+"E3"*{\xcaph@(0)}
,(-0.4375,-0.5625)+"E3"*{\xcapv[1.875]@(0)}
,(0.8125,0.9375)="D1"
,(0,0)+"D1"*{\hcap}
,(-0.5,-0.5)+"D1"*{\xcaph@(0)}
,(-0.75,-0.5)+"D1"*{\xbendl[-0.5]}
,(-0.875,-0.625)+"D1"*{\xbendu[-0.5]}
,(-1,0)+"D1"*{\xbendl-}
,(-1.25,-0.25)+"D1"*{\xbendu-}
,(-0.875,-0.875)+"D1"*{\xcapv@(0)}
,(-1.25,-0.75)+"D1"*{\xcapv[1.25]@(0)}
,(-0.875,-1.5)+"D1"*{\vcap[-0.875]}
,(-0.4375,-0.5625)+"D1"*{\xcapv[1.875]@(0)}
,"D1"+"D1"="D2"
,(0,0)+"D2"*{\hcap}
,(-0.5,0)+"D2"*{\xcaph@(0)}
,(-0.5,-0.5)+"D2"*{\xcaph@(0)}
,(-0.75,-0.5)+"D2"*{\xbendl[-0.5]}
,(-0.875,-0.625)+"D2"*{\xbendu[-0.5]}
,(-1,0)+"D2"*{\xbendl-}
,(-1.25,-0.25)+"D2"*{\xbendu-}
,(-0.875,-0.875)+"D2"*{\xcapv@(0)}
,(-1.25,-0.75)+"D2"*{\xcapv[1.25]@(0)}
,(-0.875,-1.5)+"D2"*{\vcap[-3]}
,(0.625,0.25)+"D2"*{\xcapv[3.5]@(0)}
,(0.625,1)+"D2"*{\xcapv@(0)}
,(-6,0.375)+"D2"*{\xcaph[17]@(0)}
,(-6,0.375)+"D2"-(0,7)*{\xcaph[17]@(0)}
,(-8.75,-9.5);(-7.5,-7.5),**\dir{-}
,(8.25,-9.5);(9.5,-7.5),**\dir{-}
,(-8.75,4.5);(-7.5,6.5),**\dir{-}
,(8.25,4.5);(9.5,6.5),**\dir{-}
,(-7,-5.125)*{\gamma}
,(-0.25,-3)*{\alpha}
,(6,-3)*{\pi_1(\text{slab}\backslash\alpha)\cong Z}
\end{xy}\]
\caption{}
\label{fig:B2}
\end{figure}

In this geometry, clearly loops with zero winding defuse only slowly (while points defuse quickly).  Since $\text{slab}\verb|\|\alpha$ is homeomorphic a solid torus, it is natural to wonder if the circular coordinate---even though not geometrically a Cartesian product---might be interpreted as imaginary time in an exotic finite temperature path integral.

\section{Gropes}

As a final example we complete \emph{half gropes} $H_n$, $H_n^+$ to \emph{gropes} $G_n$ and $G_n^+$.  If caps are present, this leads to interesting presentations of the trivial group, associated now to the commutator series (rather than the l.~c.~s.~of Section B).  The presentation of $\pi_1(G_n^+)\cong \{e\}$ has exponentially (in $n$) many generators and relators; however, each generator is the consequence of only linearly many relators, via the many half groups $H_n^+\subset G_n^+$.  We start with a geometric picture of $G_3^+$
\begin{figure}[hbpt]
\vspace{.3in}
\[\begin{xy}<5mm,0mm>:
(0,0)="A"
,(0,0)+"A"*{\vcap[12]}
,(0,0)+"A"*{\xbendd[-3]}
,(4.5,0)+"A"*{\xbendu[3]}
,(0,-1.5)+"A"*{\xbendd[3]}
,(4.5,-1.5)+"A"*{\xbendu[-3]}
,(1.5,-3)+"A"*{\xcapv[3]@(0)}
,(4.5,-3)+"A"*{\xcapv[3]@(0)}
,(1.5,-4.5)+"A"*{\xcapv[3]@(0)}
,(4.5,-4.5)+"A"*{\xcapv[3]@(0)}
,(2,-6)+"A";(3,-6)+"A"*\dir{}*\ellipse<42pt,21pt>{}
,(1.8,1.15)+"A";(2.8,1.45)+"A"*\dir{}*\ellipse<21pt,42pt>_,=:a(180){}*\ellipse<21pt,42pt>{.}
,(1.5,0.5)+"A"*\ellipse<8pt>:a(0),^,:a(180){.}*\ellipse<8pt>:a(180),^,:a(0){-}
,(2.25,0)+"A";(1.25,0)+"A"*\dir{}*\ellipse<4pt,8pt>:a(90),^,:a(270){.}*\ellipse<4pt,8pt>:a(270),^,:a(90){-}
,(0.96875,0)+"A"*\ellipse<1.5pt>{-}
,(0.91,0)+"A"*{\vloop[-0.225]}
,(0.83,0)+"A"*\ellipse<2.25pt,1.5pt>{-}
,(0.745,0)+"A"*{\vloop[0.33]}
,(1.5,0.9325)+"A"*\ellipse<1.5pt,2.25pt>{-}
,(1.5,1.0625)+"A"*\ellipse<2.25pt,1.5pt>{-}
,(1.21875,0.5)+"A"*{\vloop[1.125]}
,(1.25,0.28125)+"A"*{\hloop[-1.125]}
,(1.8,0.45)+"A"+"A"*\cir<21pt>{dl^r}
,(12,0)+"A"+"A"*[white]\frm<3pt>{*}
,(9,-6)+"A"+"A"*[white]\frm<3pt>{*}
,(2,-12)+"A"+"A"*{\mbox{\huge $\gamma$}}
,(9,1.5)="B"
,(0,0)+"B"*{\hcap[12]}
,(-1.5,0)+"B"*{\xbendl[-3]}
,(-1.5,-4.5)+"B"*{\xbendr[3]}
,(-3,0)+"B"*{\xbendl[3]}
,(-3,-4.5)+"B"*{\xbendr[-3]}
,(-4.5,-1.5)+"B"*{\xcaph[3]@(0)}
,(-4.5,-4.5)+"B"*{\xcaph[3]@(0)}
,(-6,-1.5)+"B"*{\xcaph[3]@(0)}
,(-6,-4.5)+"B"*{\xcaph[3]@(0)}
,(-7,-3)+"B";(-6,-3)+"B"*\dir{}*\ellipse<21pt,42pt>{}
,(1,-3)+"B";(2,-3)+"B"*\dir{}*\ellipse<26pt,13pt>:a(0),^,:a(180){.}*\ellipse<26pt,13pt>:a(180),^,:a(0){-}
,(-0.9325,-3.2)+"B";(0.0625,-3)+"B"*\dir{}*\ellipse<30pt,15pt>:a(0),^,:a(180){.}
,(1.0625,-2.8)+"B";(0.0625,-3)+"B"*\dir{}*\ellipse<30pt,15pt>:a(180),_,:a(0){-}
,(1.3725,-2)+"B"*{\vcap[2.5]}
,(0.875,-1.2)+"B";(-0.125,-1.5)+"B"*\dir{}*\ellipse<18pt,12pt>:a(0),^,:a(180){-}
,(0,-2.125)+"B"*\ellipse<6pt>{-}
,(0,-1.59375)+"B"*\ellipse<6pt,10pt>{-}
,(2,-2.25)+"B"*\ellipse<4pt>{-}
,(2,-2.11)+"B"*{\hloop[0.575]}
,(3,-1.875)+"B";(2,-1.875)+"B"*\dir{}*\ellipse<3.5pt,6.5pt>{-}
,(2,-1.64)+"B"*{\hloop[-0.925]}
\ar@{-}@`{(3.5,4)+"A"+"A",(4.375,5.6875)+"A"+"A"} (0.75,1.5)+"A"+"A";(4.75,5.6875)+"A"+"A"
\ar@{-}@`{(5.5,0.25)+"A"+"A"} (1.8125,-1.03125)+"A"+"A";(6.1875,0.0625)+"A"+"A"
\ar@{-}@`{(2.75,-4.75)+"B"+"B"} (2.75,-4)+"B"+"B";(2.125,-6)+"B"+"B"
\ar@{-}@`{(5.25,-4.75)+"B"+"B"} (5.25,-4)+"B"+"B";(5.8125,-6)+"B"+"B"
\ar@{-}@`{(1.25,-4)+"B"+"B"} (1,-2.8125)+"B"+"B";(2.25,-5.75)+"B"+"B"
\ar@{-}@`{(-1.5,-4.25)+"B"+"B"} (-1.475,-3.1875)+"B"+"B";(-1.9325,-6.25)+"B"+"B"
\ar@{-}@`{(2.625,2)+"A"+"A",(2.75,1.875)+"A"+"A"} (2.875,2.25)+"A"+"A";(3,2)+"A"+"A"
\ar@{-}@`{(3.375,1.75)+"A"+"A",(3.125,1.5)+"A"+"A"} (3.125,2)+"A"+"A";(2.875,1.78125)+"A"+"A"
\ar@{-}@`{(-1.25,-3.5)+"B"+"B",(-1,-4)+"B"+"B"} (-0.28125,-2.53125)+"B"+"B";(0,-3.8125)+"B"+"B"
\ar@{-}@`{(2,-4.25)+"B"+"B",(2,-5)+"B"+"B"} (0,-3.8125)+"B"+"B";(0,-4.6875)+"B"+"B"
\end{xy}
\begin{xy}<5mm,0mm>:
(-4,4)="E"
,(0,0)+"E"*{\xbendr[-3]}
,(0.75,-0.75)+"E"*{\xbendd[3]}
,(2.25,-2.25)+"E"*{\xcapv[3]@(0)}
,(2.25,-3.75)+"E"*{\xbendd[-3]}
,(3,-4.875)+"E"*{\xbendr[1.5]}
,(3,-5.625)+"E"*{\xbendl[-1.5]}
,(2.25,-6)+"E"*{\xbendu[-3]}
,(2.25,-7.5)+"E"*{\xcapv[3]@(0)}
,(0.75,-9)+"E"*{\xbendu[3]}
,(0,-9.75)+"E"*{\xbendl[3]}
,(8.5,-11.25)+"E"+"E"*{G_3^+}
\end{xy}\]
\vspace{.2in}
\caption{}
\label{fig:C1}
\end{figure}
where some of the caps and even the final surface stages have become too small (in the illustration) to draw carefully.  In order to maintain the correspondence between area and number of group relations we should actually think of each punctured torus $T$ piece of each stage as the same size and shape.  In words start with a punctured torus $T_0$, glue $T_{00}$ and $T_{01}$ to its meridian and longitude.  Then continue to strings of length $n$ gluing $T_{n\text{-string}}$.  This produces $G_n$.  To produce $G_n^+$, continue by adding $2^n$ disk to the meridian and longitudes of the top stage.

The presentations are:
\begin{equation}\label{eqn:C1}
\begin{aligned}
\hspace{1cm}
\pi_1(G_n) = & \{0,1,00,01,10,11,\ldots,\overbrace{11\cdots 1}^n \mid \text{each string of length $< n$} \\
& \text{ is the commutator of its two extensions}\}
\end{aligned}
\end{equation}
$\pi_1(G_n^+) = \{\text{above presentations} + \text{the relations} : \text{all } n \text{-strings are trivial}\} = \{e\}$, the trivial group.

Analogously to the results of Section B we have:

\begin{proposition}\label{prop:C1}
Any map of a disk into $G_n^+$ bounding $\gamma$ must pass over at least $2^n$ caps (counted with multiplicity) and therefore have area exponential in $n$. \qed
\end{proposition}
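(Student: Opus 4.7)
I would prove Proposition~C.1 by induction on $n$, following the template of Proposition~B.1 and its Lemma~B.3. The inductive statement (analog of Lemma~B.3) is: for any compact planar surface $P$ and map $f:(P,\partial P)\to(G_n^+,\gamma)$ with $f_\ast[\partial P]=\pm 1\in H_1(\gamma;\mathbb{Z})$, the preimage of the set of cap centers has cardinality $\geq 2^n$. Proposition~C.1 is the case $P=D^2$.

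For the base case $n=1$, $G_1^+=T\cup D_m\cup D_\ell$ is a punctured torus with both meridian and longitude capped, with cap centers $p_m,p_\ell$. Make $f$ transverse to $p_m,p_\ell$ and excise small disks around preimages to obtain $f:P'\to T$. Since $\gamma=[m,\ell]$ is null-homologous in $T$ (the commutator vanishes in $H_1(T;\mathbb{Z})=\mathbb{Z}^2$), the boundary-winding equation $0=[\gamma]+\bigl(\sum\pm 1\bigr)[m]+\bigl(\sum\pm 1\bigr)[\ell]\in H_1(T)$ gives zero signed sums for $f^{-1}(p_m)$ and $f^{-1}(p_\ell)$. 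If both preimages were empty, $f$ would factor through $T$, contradicting $\gamma=[m,\ell]\neq e$ in $\pi_1(T)=F(m,\ell)$. Thus at least one preimage is nonempty, and the zero-sum condition forces it to have cardinality $\geq 2 = 2^1$.

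For the inductive step, decompose $G_n^+=T_\emptyset\cup G_{n-1}^+(m)\cup G_{n-1}^+(\ell)$, with two height-$(n-1)$ gropes attached along $m$ and $\ell$. Place $f$ transverse to $m\cup\ell$ and cut $P$ along $M:=f^{-1}(m)$ and $L:=f^{-1}(\ell)$ into planar sub-pieces, each mapping into one of the three regions. The essential assertion is the existence of a planar sub-piece $P_m\subset f^{-1}(G_{n-1}^+(m))$ with $f_\ast[\partial P_m]=\pm 1\in H_1(m)$, and symmetrically a sub-piece $P_\ell\subset f^{-1}(G_{n-1}^+(\ell))$ with unit degree on $\ell$. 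Applying the inductive hypothesis to each gives $\geq 2^{n-1}$ cap preimages in each of the two gropes, and since the cap sets of $G_{n-1}^+(m)$ and $G_{n-1}^+(\ell)$ are disjoint, the total is $\geq 2\cdot 2^{n-1}=2^n$.

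The hard part will be establishing existence of unit-degree sub-pieces $P_m,P_\ell$. I would mimic Lemma~B.3 by choosing dual arcs $\alpha,\beta\subset T_\emptyset$ satisfying $\alpha\cdot m=\beta\cdot\ell=1$ and $\alpha\cdot\ell=\beta\cdot m=0$. Because $\partial P$ traces the commutator word $m\ell m^{-1}\ell^{-1}$ with total boundary degree $\pm 1$, the preimages $f^{-1}(\alpha),f^{-1}(\beta)$ are properly embedded $1$-submanifolds of $P$ whose algebraic intersections with $M,L$ are computable from the boundary word. Combining these intersection counts with the vanishing of the intersection form on arcs in a planar surface then forces some arc of $M$ to bound a sub-disk in $P$ on which $\partial$ winds once around $m$, yielding $P_m$; symmetrically one obtains $P_\ell$. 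This planarity-versus-commutator tension, directly analogous to the degree argument that ruled out factorization through $H_1$ in the proof of Lemma~B.3, is the technical heart of the proof.
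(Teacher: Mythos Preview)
The paper gives no detailed proof of Proposition~C.1; it is marked \qed\ with the remark ``Analogously to the results of Section~B''.  Your inductive framework and your base-case homology computation are essentially the intended ones.  However, the inductive step contains a genuine gap.

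You assert that one can always find a unit-degree sub-piece $P_m\subset f^{-1}(G_{n-1}^+(m))$ \emph{and} a unit-degree sub-piece $P_\ell\subset f^{-1}(G_{n-1}^+(\ell))$.  This is false.  Since $G_{n-1}^+(m)$ is simply connected and attached along $m$, the loop $m$ is already null-homotopic in $T_\emptyset\cup G_{n-1}^+(m)$, and hence so is $\gamma=[m,\ell]$.  Concretely: let $\Delta_m\subset G_{n-1}^+(m)$ be any disk bounding $m$; build a van~Kampen disk for $\gamma=m\ell m^{-1}\ell^{-1}$ from two copies of $\Delta_m$ (one for $m$, one for $m^{-1}$) together with the free cancellation $\ell\ell^{-1}$.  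After a small perturbation of the fold into $T_\emptyset$, this $f:D^2\to G_n^+$ has $f^{-1}(G_{n-1}^+(\ell))=\emptyset$, so no $P_\ell$ exists.  Your proposed dual-arc intersection argument therefore cannot force a sub-piece on each side.

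The correct doubling, following Proposition~B.1, comes not from the two sub-gropes but from the \emph{two sub-pieces} produced by the analog of Lemma~B.3.  Pinching both $G_{n-1}^+(m)$ and $G_{n-1}^+(\ell)$ to cap disks yields $G_1^+$, and the needed lemma is: for any planar degree-$\pm 1$ map $(P,\partial P)\to(G_1^+,\gamma)$ one has $|f^{-1}(p_m)|+|f^{-1}(p_\ell)|\geq 2$.  Your own homology computation shows both signed sums vanish; that they are not both empty requires the planarity obstruction from Lemma~B.3 (your appeal to ``$\gamma\neq e$ in $\pi_1(T)$'' suffices only when $P=D^2$, not for general planar $P$).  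Each of the $\geq 2$ preimage points then yields a degree-$\pm 1$ planar sub-domain mapping into one of the two sub-gropes---possibly both into the same one, as in the counterexample above---and the inductive hypothesis applied to each gives $\geq 2\cdot 2^{n-1}=2^n$ cap hits.
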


\begin{proposition}\label{prop:C2}
$\gamma$ bounds thin disks $\Delta$ mapping into $G_n^+$ in the sense that $\gamma$ may be homotoped to a point along $\Delta$ without ever increasing its length more than linearly in $n$.
\end{proposition}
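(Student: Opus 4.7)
The plan is to exhibit an explicit algebraic null-homotopy of $\gamma$ inside the presentation (\ref{eqn:C1}) of $\pi_1(G_n^+)$. Since $G_n^+$ is a $2$-complex realizing this presentation, any sequence of relator applications from $\gamma$ to the empty word corresponds to a homotopy of $\gamma$ in $G_n^+$, and the geometric length of the loop at each intermediate stage is, up to a bounded multiplicative constant, the length of the corresponding word. Hence it suffices to produce a sequence $\gamma = w_0 \to w_1 \to \cdots \to w_K = e$ in which every $w_i$ has length $O(n)$, where each arrow either inserts/removes a generator-inverse pair, applies a grope relation $s \mapsto [s0, s1]$ (or its inverse), or deletes a length-$n$ string via a cap.

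The algorithm adapts the ``leftmost-letter'' procedure used for the half grope in Section B. Start from $\gamma = 0 \cdot 1 \cdot 0^{-1} \cdot 1^{-1}$ and first process the letter $0$: replace it by $[00, 01]$ using the relation, then replace $00$ by $[000, 001]$, and so on, always expanding the leftmost shortest-string letter. After $n-1$ such expansions the leading letter is a string of length $n$, killed by the cap relation. The surrounding commutator framework then unwinds: each length-$n$ generator encountered at the ``outer'' position is also trivial and is deleted, and the freshly exposed inverse pairs of intermediate-length strings collapse in succession, returning the word to length $\leq 4$ containing only $1$, $1^{-1}$ together with an at-most-constant commutator residue. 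The same procedure is then applied to $1$, and finally to the residual letters. At every moment only one ``descending chain'' of nested commutators is kept open; the active chain contributes at most $4n$ letters to the word, while the unprocessed portion of the original $\gamma$ contributes at most $3$, yielding the $O(n)$ length bound.

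The main obstacle is verifying that repeated application of $s \mapsto [s0, s1]$ does not force exponential blow-up when commutators become nested many levels deep, since a naive simultaneous expansion of the iterated commutator one obtains after $n$ substitutions would have length $\sim 4 \cdot 2^n$. The saving grace, already visible in the explicit trace displayed in Section B for $H_4^+$, is that the algorithm never holds a fully nested expansion in memory: each expansion is immediately followed by further expansion of the new leftmost letter, and cap cancellations fire as soon as a length-$n$ letter is exposed, so the ``tail'' of the nested commutator remains uncomputed until its turn arrives. A clean way to formalize this is by induction on $n$: the inductive step observes that contracting a single level-zero generator through the sub-grope rooted at it is a direct instance of the half-grope case $H_n^+$ treated in Section B, whose linear-length null-homotopy can be reused verbatim; the outer commutator structure then reduces the problem for $G_n^+$ to two sequential copies of the half-grope procedure plus $O(1)$ bookkeeping moves, preserving the linear bound.
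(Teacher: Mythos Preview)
Your proposal is correct and in the same spirit as the paper's implicit argument (the paper states Proposition~C.2 without proof, pointing only to ``the many half gropes $H_n^+\subset G_n^+$''). The core mechanism you identify---processing one nested commutator branch at a time so that only a single depth-$n$ chain is ever expanded---is exactly the thinness phenomenon exhibited in the Section~B trace.

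Two remarks on streamlining. First, your final induction can be replaced by a one-line observation: the half grope $H_n^+$ itself embeds as a sub-$2$-complex of $G_n^+$ (take the spine $T_\emptyset, T_{(0)}, T_{(00)},\ldots$ together with the single cap on $0^n$), and under this embedding the two boundary circles $\gamma$ coincide. Hence the thin null-homotopy of $\gamma$ already constructed in Section~B for $H_n^+$ is, verbatim, a thin null-homotopy in $G_n^+$. No induction or decomposition of $\gamma$ into its four letters is needed. Second, your phrasing ``the sub-grope rooted at [a level-zero generator] is a direct instance of the half-grope case $H_n^+$'' is slightly off: the sub-grope rooted at $0$ is a \emph{full} grope $G_{n-1}^+$, not a half grope. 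What is true (and what you presumably mean) is that this full sub-grope \emph{contains} an embedded half grope which already suffices to null-homotope $0$; but the cleaner route is the direct embedding above, which handles $\gamma$ in one shot rather than four.
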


Like Section A the space with poorly diffusing loops is now simply connected.  Unlike Section A the ``parent group,'' $\pi_1(G_n)$, the group before adding the trivializing relations, is nonabelian.

A final remark.  Looking for impediments to loop diffusion has brought us into the heart of \emph{wild topology}.  Consider an infinite grope $G_\infty$ with geometrically shrinking stages.  Take a tapered neighborhood $\mathcal{N}(G_\infty)$ which becomes thinner out toward the higher stages and complete with the dyadic Cantor set of limit point to $G_\infty$.  This closed neighborhood $\overline{\mathcal{N}}(G_\infty)$ is nothing other than the famous Alexander horned ball, the exotic closed complementary region of Alexander's ``horned'' embedding of the $2$-sphere into the $3$-sphere.

\end{document}